\documentclass[leqno]{article}
\usepackage{ltexpprt}
\usepackage{graphicx}
\usepackage{caption}
\usepackage{subcaption}
\pagestyle{plain}
\usepackage{ amssymb, amsmath}
\newtheorem{claim}{Claim}
\newcommand*{\QEDA}{\hfill\ensuremath{\blacksquare}}
\newcommand{\hide}[1]{}
\title{\bf Analyzing the Optimal Neighborhood: Algorithms for Budgeted and Partial Connected Dominating Set Problems\thanks{The work is supported by NSF grants CCF 1217890 and CCF 0937865.}}

\author{Samir Khuller \thanks{ Computer Science Department, University of Maryland, College Park. E-mail: \texttt{samir@cs.umd.edu}}
        \and Manish Purohit \thanks{ Computer Science Department, University of Maryland, College Park. E-mail: \texttt{manishp@cs.umd.edu}}
	\and Kanthi K. Sarpatwar \thanks{ Computer Science Department, University of Maryland, College Park. E-mail: \texttt{kasarpa@cs.umd.edu}}}
\date{}

\DeclareMathOperator*{\argmax}{arg\,max}

\usepackage{mathtools}
\usepackage{algpseudocode}
\usepackage{algorithmicx}
\begin{document}
\maketitle

\begin{abstract}
\small\baselineskip=9pt  
We study  \emph{partial} and \emph{budgeted} versions of the well studied 
connected dominating set problem. 
In the partial connected dominating set problem ({\sc Pcds}), we are given an 
undirected graph $G = (V,E)$ and an integer $n'$, and the goal is to find 
a minimum subset of vertices that induces a connected subgraph of $G$
and dominates at least $n'$ vertices. 
We obtain the first polynomial time algorithm with an 
$O(\ln \Delta)$ approximation factor for this problem, 
thereby significantly extending the results of Guha and Khuller (Algorithmica, Vol. 20(4), Pages 374-387, 1998) for 
the connected dominating set problem. 
We note that none of the methods developed earlier 
can be applied directly to solve this problem. 
In the budgeted connected dominating set problem ({\sc Bcds}), there is a budget 
on the number of vertices we can select, and the goal is to dominate as many 
vertices as possible. 
We obtain a $\frac{1}{13}(1-\frac{1}{e})$ approximation algorithm for this 
problem. Finally, we show that our techniques extend to a more general 
setting where the profit function associated with a subset of vertices is a 
\lq\lq{}special\rq\rq{} submodular function. 
This generalization captures the connected dominating set problem with capacities and/or weighted profits as special cases.
This implies a $O(\ln q)$ approximation (where $q$ denotes the quota) and an $O(1)$ 
approximation algorithms for the partial and budgeted versions of these 
problems.  While the algorithms are simple, the results make a surprising
use of the greedy set cover framework in defining a useful profit function.
\end{abstract}

\section{Introduction}

A \emph{connected dominating set} ({\sc Cds}) in a graph is a dominating set 
that induces a connected subgraph.
The {\sc Cds} problem, which seeks to find the minimum such set, has been widely
studied~\cite{alzoubi, aravind,GuhaKhuller,wu,das1997routing,Srinivasan, du2013connected,liu:liang,cheng2005,cheng2007} starting from the 
work of Guha and Khuller~\cite{GuhaKhuller}.
The {\sc Cds} problem is NP-hard and thus the literature has focused
on the development of fast polynomial time approximation algorithms. 
For general graphs, Guha and Khuller \cite{GuhaKhuller} propose an algorithm 
with a $\ln \Delta+3$ approximation factor, where $\Delta$ is the maximum degree of any vertex. 
Better approximation algorithms are known in special classes of graphs. 
For the case of planar~\cite{Hajiaghayi} or geometric unit 
disk graphs~\cite{du} polynomial time approximation schemes (PTAS) are known.  
This problem has also been extensively studied in the distributed 
setting~\cite{alzoubi, aravind}.
Not surprisingly, {\sc Cds}  problem in general graphs is at least as hard to approximate
as the set cover problem for which a hardness result of $(1-\epsilon) \log n$ (unless $NP\subseteq DTIME(n^{O(\log\log{n})})$) follows
by the work of Feige \cite{Feige}. 


{\sc Cds} has become an extremely popular topic, 
for example, the recent book by Du and Wan \cite{du2013connected}
focuses on 
the study of ad hoc wireless networks as {\sc Cds}s provide a platform for routing on
such networks. In these ad hoc wireless networks,  a {\sc Cds} can act as a virtual backbone so that only nodes belonging to the {\sc Cds} are responsible for packet forwarding and routing. Minimizing the number of nodes in the virtual backbone leads to increased network lifetime, and lesser bandwidth usage due to control packets, and hence the {\sc Cds} problem has been extensively studied and applied to create such virtual backbones. 

One shortcoming of using a {\sc Cds} as a virtual backbone is that a few distant clients (outliers) can have the undesirable effect of increasing the size of the {\sc Cds} without improving the quality of service to a majority of the clients. In such scenarios, it is often desirable to obtain a much smaller backbone that provides services to, say,  (at least) 90\% of the clients. Liu and Liang \cite{liu:liang} study this problem of finding a minimum \emph{partial connected dominating set} in wireless sensor networks (geometric disk graphs) and provide heuristics 
(without guarantees) for the same. 

A complementary problem is the budgeted {\sc Cds} problem where we have a budget of $k$ nodes, and we wish to find a connected subset of $k$ nodes which dominate as many vertices as possible. Budgeted domination has been studied in sensor networks where bandwidth constraints limit the number of sensors we can choose and the objective is to maximize the number of targets covered \cite{cheng2005,cheng2007}. 

Another application arises in the context of social networks. Consider a social network where vertices of the network correspond to people and an edge joins two vertices if the corresponding people influence each other. Avrachenkov et al.\@ \cite{Konstantin} consider the problem of choosing $k$ connected vertices having maximum total influence in a social network using local information only (i.e., the neighborhood of a vertex is revealed only after the vertex is “bought”) and provide heuristics (without guarantees) for the same. 
Borgs et al.~\cite{borgs2012power} show that no local algorithm for the partial dominating set problem can provide an approximation guarantee better than $O(\sqrt{n})$. As the influence of a set of vertices is simply the number of dominated vertices, these problems are exactly the budgeted and partial connected dominating set problems with the additional restriction of local only information.

Budgeted versions of set cover (known as max-coverage)\footnote{Here
instead of finding the smallest sub-collection of sets to cover
a given set of elements, we fix a budget on number of
sets we wish to pick with the objective of maximizing the number of
covered elements.} are well understood and the standard greedy algorithm is known to give the optimal $1 - \frac{1}{e}$ approximation~\cite{NWF}. Khuller et al.\@~\cite{Khuller2} give a $(1 - \frac{1}{e})$ approximation algorithm for a generalized version with costs on sets. In addition, we may consider a partial version of the set cover problem, also known as partial covering, in which we wish to pick the minimum number of sets to cover a pre-specified number of elements. Kearns~\cite{kearns} first showed that greedy gives a $2H_n+3$ approximation guarantee (where $n$ is the ground set cardinality and $H_n$ is the $n^{th}$ harmonic number), which was improved by Slav{\'\i}k~\cite{slavik} to obtain a guarantee of {\it min}$(H_{n'}, H_{\Delta})$(where $n'$ is the minimum coverage required and $\Delta$ is the maximum size of any set).  Wolsey~\cite{wolsey82} considered the more general \emph{submodular cover} problem and showed that the simple greedy delivers a best possible $\ln n$ approximation.   
 For the case where each element belongs to at most $f$ (called the \emph{frequency}) different sets, Gandhi et al.\@ \cite{GKS}, using a primal-dual algorithm, and Bar-Yehuda~\cite{bar-yehuda}, using the local-ratio technique, achieve an $f$-approximation guarantee.

Unfortunately for both the budgeted and partial
versions of the {\sc Cds} problem, greedy approaches based on prior
methods fail.
The fundamental reason is that while the greedy algorithm
works well as a method for rapidly \lq\lq{}covering\rq\rq{} nodes, the cost
to connect different chosen nodes can be extremely high if the
chosen nodes are far apart. On the other hand if we try to maintain
a connected subset, then we cannot necessarily select nodes
from dense regions of the graph.
In fact, none of the approaches in
the work by Guha and Khuller~\cite{GuhaKhuller} appear  to 
extend to these versions. 

Partial and budgeted optimization problems have been extensively studied in 
the literature.
Most of these problems, with the exception of partial and 
budgeted set cover, required significantly different techniques and ideas 
from the corresponding \lq\lq{}complete\rq\rq{} versions. 
We will now cite several such problems.

The best example is the minimum spanning tree problem, which is well known 
to be polynomial time solvable. However 
the partial version of this problem where 
we look for a minimum cost tree which spans at least $k$ vertices is NP-hard~\cite{ravi}. 
A series of approximation techniques~\cite{arora, baruch,blum,garg1} finally 
resulted in a $2$-approximation~\cite{garg2} for the problem. 

Partial versions of several classic location problems like $k$-center and 
$k$-median have required new techniques as well. The partial 
$k$-center problem, which is also called the outlier $k$-center problem or the
robust $k$-center problem, requires us to minimize the maximum distance to the 
\lq\lq{}best\rq\rq{} $n'$ nodes (while the complete version requires us to 
consider all the nodes) to the centers. 
Charikar et al.\@~\cite{khuller:outlier} gave a 3-approximation algorithm whose 
analysis was significantly different from the classic $k$-center 2-approximation
algorithm~\cite{Gonzalez,hochbaum1986unified}. 
Chen~\cite{chen2008constant} gives a constant approximation for outlier 
$k$-median problem, while Charikar et al.\@~\cite{khuller:outlier} gave a 4-approximation for the outlier uncapacitated facility location problem.

Several other optimization problems need special techniques 
to tackle the corresponding partial versions. Notable examples of such 
problems, include \emph{partial vertex cover}~\cite{srinivasan2001, bar:cappvc,mestre:pvc,GKS,bshouty,halperin2002},
quota Steiner tree problems~\cite{johnson2000prize}, budgeted and partial 
node weighted Steiner tree problems~\cite{Moss,vahid}, and scheduling with 
outliers~\cite{CharikarKhuller,Gupta:sched}. We end this 
subsection by noting that partial versions of some optimization problems are 
completely inapproximable even though, the corresponding complete version has a
small constant approximation algorithm. The best example of this is the 
\emph{robust subset resource replication} problem studied by 
Khuller et al.\@~\cite{khu:sah:kan}.

\subsection{Other Related Work}

In the \emph{group Steiner tree problem}, we are given a graph $G=(V,E)$, an associated cost function $c:E\rightarrow \mathbb{R}^+\cup \{0\}$, and a collection of groups of vertices $g_1, g_2, \ldots, g_k$. The goal is to find a minimum cost tree that contains 
at least one vertex from each group. It can be observed that the connected dominating set problem reduces to the group Steiner tree problem by creating a group for every vertex containing the neighborhood of that vertex. Garg et al.\@~\cite{Garg} obtain a $O(\log (\max_{i\in[k]} |g_i|)\log k)$ approximation for this problem, in the special case when the graph is a tree. Using a decomposition result due to Bartal~\cite{Bartal1}, Garg et al.\@~\cite{Garg} extend the tree result to obtain a $O(\log^3n\log k)$ approximation algorithm for arbitrary graphs.  Fakcharoenphol et al.~\cite{FRT} improve  Bartal's decomposition result, consequently obtaining a $O(\log^2n\log k)$ approximation for the group Steiner tree problem in arbitrary graphs. Halperin et al.\@~\cite{Krauthgamer} note that Garg et al.\@~\cite{Garg}'s algorithm also gives a $O(\log (\max_{i\in[k]}|g_i|))$ approximation for the budgeted group Steiner tree problem on trees. They also show a $\log^{2-\epsilon}k$ hardness of approximation for the (partial) group Steiner problem and a $\log^{1-\epsilon}k$ hardness of approximation for the budgeted version.
Chekuri et al.~\cite{CEK} gave a combinatorial algorithm for the group Steiner tree problem on trees, with an approximation guarantee of $O((\log\sum_i |g_i|)^{1+\epsilon}\log k)$. Calinescu and Zelikovsky~\cite{CZ} extended Chekuri et al.~\cite{CEK}'s result to the more general problem of \emph{polymatroid Steiner tree}. 

\subsection{Our Contributions}
Our results can be summarized as follows
\begin{itemize}
\item[-]  In Section~\ref{sec:partial}, we obtain the first $O(\ln \Delta)$ approximation algorithm for the {\sc Pcds} problem. To be precise, our  approximation guarantee is $4\ln \Delta + 2 + o(1)$, where $\Delta$ is the maximum degree. 
\item[-]  In Section~\ref{sec:budgeted}, we obtain a $\frac{1}{13}(1-\frac{1}{e})$-approximation algorithm for the {\sc Bcds} problem. This is the first constant approximation known for {\sc Bcds}.
\item[-]  In Section~\ref{sec:bgcd}, we generalize the above problems to a special kind of submodular optimization problem (to be defined later), which has \emph{capacitated connected dominating set} problem and \emph{weighted profit connected dominating set} problem as special cases. Again, we obtain $O(\ln q)$ and $\frac{1}{13}(1-\frac{1}{e})$ approximation algorithms for the partial and budgeted version of this problem respectively where $q$ denotes the quota for the partial version. 
\end{itemize}


\section{Preliminaries}
\label{sec:preliminaries}
We now proceed to formally define the problems we consider in this paper.

\noindent {\sc Partial Connected Dominating Set Problem ({\sc Pcds})}.
Given an undirected graph $G=(V,E)$, and an integer (quota) $n'$, find a
minimum size subset $S \subseteq V$ of vertices such that the graph induced 
by $S$ is connected, and $S$ dominates at least $n'$ vertices.

\noindent {\sc Budgeted Connected Dominating Set Problem ({\sc Bcds})}.
Given an undirected graph $G=(V,E)$, and an integer (budget) $k$, find a 
subset $S \subseteq V$ of at most $k$ vertices such that the graph induced by 
$S$ is connected, and the number of vertices dominated by $S$ is maximized.

Before defining the remaining problems, we introduce the notion of \emph{special submodularity}.

\noindent {\sc Special Submodular Function}.
Let $G=(V,E)$ be an arbitrary graph.  A function $f:2^V\rightarrow \mathbb{Z}^+\cup \{0\}$, is said to have the \emph{special submodular} property if it satisfies the following- 
\begin{itemize}
\item $f$ is submodular. That is $f(A \cup \{v\}) - f(A) \geq f(B\cup \{v\}) - f(B) \quad\forall A,B,v$ such that $A\subseteq B\subseteq V$.
\item $f_A(X) = f_{A \cup B}(X)$, if $N(X) \cap N(B) = \phi$ $\quad\forall X, A, B \subseteq V$.
\end{itemize}
where $f_A(X) = f(A \cup X) - f(A)$ is the marginal profit of $X$ given $A$ and $N(X)$ denotes the neighborhood of $X$, including $X$ itself.

 We now define the generalized versions of {\sc Pcds} and {\sc Bcds}. 

\noindent {\sc Partial Generalized Connected Dominating Set Problem (Pgcds)}.
Given a graph $G = (V,E)$, an integer (quota)  $q$, and a monotone \emph{special submodular} profit function $f : 2^V \rightarrow \mathbb{Z^+}\cup \{0\}$, find a subset $S \subseteq V$ of minimum size, such that $f(S) \geq q$ and $S$ induces a connected subgraph in $G$. 

\noindent {\sc Budgeted Generalized Connected Dominating Set Problem (Bgcds)}.
Given a graph $G = (V,E)$, a budget $k$, and a monotone \emph{special submodular} profit function $f : 2^V \rightarrow \mathbb{Z^+}\cup \{0\}$, find a subset $S \subseteq V$ which maximizes $f(S)$ such that $|S| \leq k$ and $S$ induces a connected subgraph of $G$.

 These problems capture the following variants of partial and budgeted connected dominating set problems.
\begin{enumerate}
\item {\sc Weighted Profit Connected Dominating Set.} In this variant, each vertex has an arbitrary profit which is obtained if it is dominated by some chosen vertex.
\item {\sc Capacitated Connected Dominating Set.} In this variant, each vertex has a capacity which is the number of vertices it can dominate. 
\end{enumerate}

 For all of our algorithms we will be using an algorithm for the Quota Steiner Tree problem ({\sc Qst}) as a subroutine. We now define the {\sc Qst} problem and mention relevant results.

\noindent {\sc Quota Steiner Tree Problem ({\sc Qst})}. Given an undirected graph $G=(V,E)$, a profit function $p:V \rightarrow Z^+ \cup \{0\}$ on the vertices, a cost function $c:E \rightarrow Z^+ \cup \{0\}$ on the edges, and an integer (quota) $q$, find a subtree $T$ that minimizes $\sum_{e \in E(T)} c(e)$, subject to $\sum_{v \in V(T)} p(v) \geq q$.

Johnson et al.\@~\cite{johnson2000prize} studied the {\sc Qst} problem and showed that an $\alpha$-approximation algorithm for the $k$-MST problem can be adapted to obtain an $\alpha$-approximation algorithm for the Quota Steiner Tree problem. Using this result along with the 2-approximation for $k$-MST by Garg~\cite{garg2}, gives us the following theorem. 

\begin{theorem}[\cite{johnson2000prize,garg2}]
\label{thm:QST}
There is a 2-approximation algorithm for the Quota Steiner Tree Problem.
\end{theorem}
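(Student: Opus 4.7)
\medskip

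\noindent\textbf{Proof proposal.} My plan is to establish the theorem as a two-step reduction: first reduce the Quota Steiner Tree ({\sc Qst}) problem to the $k$-MST problem without loss in approximation ratio, then invoke Garg's $2$-approximation for $k$-MST~\cite{garg2}. Thus the whole argument is about constructing a faithful, cost-preserving reduction $\Phi : \textsc{Qst} \to k\textsc{-MST}$.

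The reduction I would use is the standard ``profit splitting'' gadget. Given a {\sc Qst} instance $(G=(V,E), p, c, q)$, form $G' = (V', E')$ as follows: keep all vertices and edges of $G$ with their original costs, and for each vertex $v \in V$ attach $p(v)$ new leaf vertices $v_1, \dots, v_{p(v)}$ via edges of cost $0$. Set the target number of vertices to be $k := q$ (alternatively, one can add a dummy root if needed to handle vertex counting for Steiner versus spanning variants, but zero-cost auxiliary leaves do not affect tree cost). The key observation is that any subtree $T \subseteq G$ can be extended to a subtree $T'$ of $G'$ spanning exactly $\sum_{v \in V(T)} p(v)$ of the auxiliary leaves at zero additional cost, and conversely any subtree $T'$ of $G'$ projects to a subtree $T$ of $G$ whose total profit is at least the number of auxiliary leaves in $T'$, again without cost increase. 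Consequently, the optimal cost of the $k$-MST instance equals that of the original {\sc Qst} instance.

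Step two is then immediate: apply Garg's $2$-approximation to $G'$ to obtain a subtree $T'$ of cost at most $2 \cdot \mathrm{OPT}_{k\text{-MST}}(G') = 2 \cdot \mathrm{OPT}_{\textsc{Qst}}(G)$ that spans at least $q$ vertices of $G'$, strip away the zero-cost auxiliary leaves to recover a subtree $T \subseteq G$ of the same cost, and note that the number of retained auxiliary leaves is at most $\sum_{v \in V(T)} p(v)$, so $T$ collects quota at least $q$.

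The main obstacle I anticipate is ensuring that the reduction is strictly polynomial even when profits $p(v)$ are given in binary; naively creating $p(v)$ leaves per vertex blows up the instance size. This is the essential content of the Johnson--Minkoff--Phillips argument~\cite{johnson2000prize}: rather than physically splitting vertices, one shows that their black-box transformation from $k$-MST to prize-collecting variants (and hence to {\sc Qst}) preserves both polynomial running time and the approximation ratio, by appealing to properties of the Lagrangian / primal-dual schema underlying Garg's algorithm. Since we are permitted to cite their reduction as given, I would invoke it directly to convert Garg's $2$-approximation into a $2$-approximation for {\sc Qst}, completing the proof. \QEDA
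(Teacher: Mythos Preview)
The paper does not give its own proof of this theorem; it simply cites Johnson--Minkoff--Phillips for the fact that any $\alpha$-approximation for $k$-MST transfers to {\sc Qst}, and Garg for $\alpha=2$. Your proposal follows exactly this route, so at the level of the paper's treatment you are done in the first two sentences.

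That said, the concrete reduction sketch you offer has a small gap worth flagging. In the backward direction you say: take a $k$-MST solution $T'$ in $G'$ spanning at least $q$ vertices, project to $T\subseteq G$, and ``note that the number of retained auxiliary leaves is at most $\sum_{v\in V(T)} p(v)$, so $T$ collects quota at least $q$.'' That inequality points the wrong way. What you actually know is that $|V(T')| \ge q$ decomposes as $|V(T)|$ original vertices plus some number of auxiliary leaves, and the leaf count is at most $\sum_{v\in V(T)} p(v)$; this only yields $\sum_{v\in V(T)} p(v) \ge q - |V(T)|$, not $\ge q$. A $k$-MST solver could in principle return a tree consisting mostly of original vertices with very few leaves. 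The standard fix is to scale: attach $(|V|+1)\cdot p(v)$ leaves to each $v$ and set $k=(|V|+1)q$, so that original vertices become negligible in the count and integrality forces $\sum p(v)\ge q$. This, together with the binary-encoding issue you already identified, is exactly why one appeals to the Johnson--Minkoff--Phillips machinery rather than the naive gadget. Since you ultimately do invoke their result directly, the proposal as a whole is correct and aligned with the paper.
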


\section{Shortcomings of Prior Approaches.}

We now describe the three approaches taken by Guha and Khuller~\cite{GuhaKhuller} to solve the {\sc Cds} problem and show why none of these approaches extend directly for the budgeted and partial coverage variants. 

\emph{Algorithm 1.}
The first algorithm is a ``one step look-ahead'' greedy algorithm where they iteratively grow a tree by selecting a \emph{pair} of vertices that together cover the most number of previously uncovered vertices.  Figure~\ref{fig:bad_example} shows a bad instance on which a $c$-step look-ahead greedy algorithm fails for the {\sc Bcds} and {\sc Pcds}. The instance contains $k$ ``spiders'' whose heads (vertices with degree $> 2$) are connected by paths of length $c+1$. The spider heads are the only vertices that offer profit greater than 3. We show that on this graph, there are {\sc Bcds} and {\sc Pcds} instances that can perform very poorly. 
Consider a {\sc Bcds} instance on the graph, with a budget $k+(c+1)(k-1)$. Clearly the optimal solution picks the path connecting all the spider heads, so that the total coverage is $(M+1)k + (c+1)(k-1)$. On the other hand, the $c$-step look-ahead greedy algorithm, might get stuck inside one of the spiders and may end up selecting as many as $M+1$ vertices from it. This is because, despite the look-ahead capability of the algorithm, the spider legs will become indistinguishable from the optimal path. For a sufficiently large value of $M$, the $c$-step look-ahead algorithm might use up all its budget on a single spider, there by obtaining a coverage of $O(M+k)$. Thus in the worst case the look-ahead greedy algorithm could have a $\Omega(k)$ approximation guarantee. Using a similar argument, we can show that, for the {\sc Pcds} instance on the graph with quota $Mk$, the approximation guarantee could be $\Omega(M)$.

\begin{figure}[hnbt]

\centering
\includegraphics[width=0.45\textwidth]{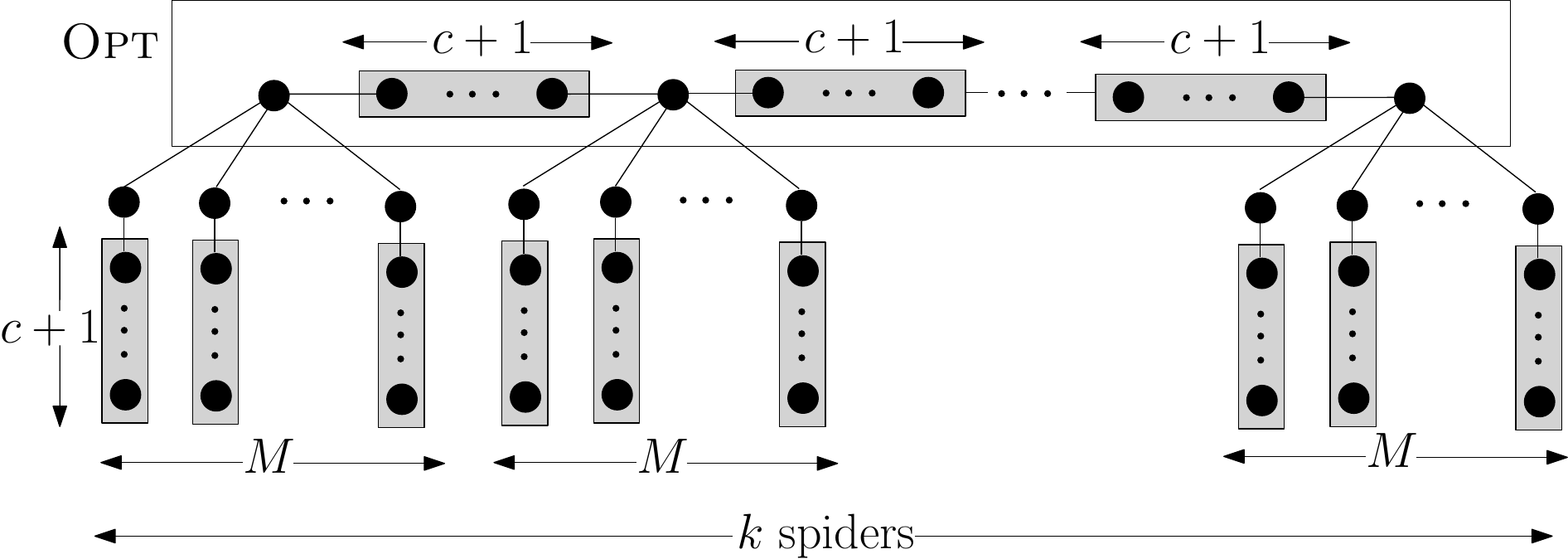}

\caption{A bad example for the $c$-step look-ahead greedy algorithm}
\label{fig:bad_example}

\end{figure}

\emph{Algorithm 2.}
The second algorithm is to find a dominating set $D$ and run a Steiner tree algorithm with the vertices in $D$ as terminals. Since the optimal connected dominating set, by definition, is a tree that dominates $D$, we can show that there exists a Steiner tree of low cost with the set $D$ as terminals. Using a constant factor approximation algorithm for the Steiner tree problem, we obtain a $O(\ln n)$ approximation for the connected dominating set. However, for the partial and budgeted versions, the optimal solution does not dominate all vertices and hence it's not possible to bound the cost of the Steiner tree in terms of the optimal solution.

\emph{Algorithm 3.}
The final algorithm builds unconnected components greedily and owing to the fact that every vertex has to be dominated, makes sure that the constructed components be connected cheaply. Again this approach fails in the partial and budgeted case because the components created when we have dominated a specified number of vertices could be far apart. 

\section{Partial Connected Dominating Set} 

In this section, we consider the partial connected dominating set ({\sc Pcds}) 
problem and give a $4 \ln \Delta +2 + o(1)$-approximation algorithm for the same. 
\label{sec:partial}
\subsection{Algorithm}
We now give a high level overview of the algorithm. 
The algorithm itself is very simple but to show that it is 
indeed a $O(\log \Delta)$ approximation requires non-trivial analysis. 
The algorithm proceeds in the following manner. 
We first run a simple greedy algorithm to find a 
(not necessarily connected) dominating set $D$. 
In each iteration, the greedy algorithm chooses a vertex that dominates the 
maximum number of previously undominated vertices. 
We call this number the  \lq\lq{}profit\rq\rq{} associated with the chosen vertex.  Given this profit function on
the nodes, we now apply a 2-approximation algorithm for the Quota 
Steiner Tree ({\sc Qst}) problem, with  quota of $n'$ to obtain a connected 
solution\footnote{Note that we could have simply defined each node's profit
as the number of vertices it can dominate and then try to connect nodes
using the algorithm for the {\sc Qst}  problem, however in this setting 
there could be a set of high profit nodes that get chosen, but since they all
dominate the {\em same} subset of nodes, we do not actually gain a profit of $n'$.}.

This is a little surprising, since the profit function depends on the
choices made by the greedy algorithm in the first phase. However,
we can show that there is a subset of vertices $D'\subseteq D$, 
of cardinality at most $|\text{\sc Opt}|\ln \Delta + 1$ whose profits sum up to at least 
$n'$ where $|\text{\sc Opt}|$ is the size of the optimum solution of the {\sc Pcds} instance. 
Furthermore the vertices in $D'$ can be connected with additional 
$(\ln \Delta + 1)|\text{\sc Opt}| + 1$ vertices. Thus, if we could find the smallest tree with total profit at least $n'$, such a tree would cost (number of edges in the tree) no more than $\hide{(2\ln \Delta +1)|\text{\sc Opt}| + 2 - 1 = }(2\ln \Delta +1)|\text{\sc Opt}| +1$. This is a special case of the {\sc Qst} problem (with unit edge costs) and hence we can apply Theorem~\ref{thm:QST} to obtain a tree of size (cost) no more than $2((2\ln \Delta +1)|\text{\sc Opt}| + 1) = (4\ln \Delta + 2)|\text{\sc Opt}|+2$. 
Thus, we obtain a $(4\ln \Delta +2 + o(1))$-approximate solution for the {\sc Pcds} problem.

\vspace{4mm}
\hrule
\vspace{1mm}
\hrule
\begin{algorithm}{ \sc Greedy Profit Labeling Algorithm for {\sc Pcds}.}
\label{algm:PCDS}

\noindent {\bf Input:} Graph $G=(V,E)$ and $n'\in \mathbb{Z}^+\cup\{0\}$.

\noindent {\bf Output:} Tree $T$ with at least $n'$ Coverage.
\vspace{1mm}

\hrule
\vspace{1mm}

\begin{algorithmic}[1]
\State Compute the greedy dominating set $D$ and the corresponding profit function $p:V\rightarrow \mathbb{N}$ using the Algorithm~\ref{algm:greedy}.
\State  Use the 2-approximation for {\sc Qst} problem \cite{johnson2000prize} on the instance $(G,p)$ with quota $n'$ to obtain a tree $T$ 
\end{algorithmic}
\vspace{1mm}
\hrule
\vspace{1mm}
\hrule
\end{algorithm}

\vspace{4mm}
\hrule
\vspace{1mm}
\hrule
\begin{algorithm} { \sc Greedy Dominating Set.}
\label{algm:greedy}

\noindent {\bf Input:} Graph $G=(V,E)$.

\noindent {\bf Output:} Dominating Set $D$ and profit function $p:V\rightarrow \mathbb{Z}^+\cup\{0\}$.
\vspace{1mm}
\hrule
\vspace{1mm}
\begin{algorithmic}[1]
\State   $D \leftarrow \phi$ 
 \State  $U \leftarrow V$ 
 \ForAll {$v \in V$}
  \State $p(v) \leftarrow 0$;
   \EndFor
 \While{$U \neq \phi$}
   \State $v \leftarrow \displaystyle \argmax_{v \in V \setminus D} \quad |N_{U}(v)|$ 
\Comment \emph{{$N_U(v)$ is the set of neighbors of $v$, including itself, in the set $U$}}
\State   $C_v \leftarrow N_U(v)$
\State  $p(v) \leftarrow |C_v|$

   \State  $U \leftarrow U \setminus N_U(v)$
   \State  $D \leftarrow D \cup \{v\}$ 
   \EndWhile

\end{algorithmic}
\hrule
\vspace{1mm}
\hrule
\end{algorithm}

\subsection{Analysis}
We first introduce some required notation.

{\bf Notation: } For every vertex $v \in D$ that is chosen by the greedy algorithm, let $C_v$ denote the set of \emph{new} vertices that $v$ dominates 
i.e.,  we have $p(v) = |C_v|$. We say that $v$  \lq\lq{}covers\rq\rq{} a vertex $w$ if and only if $w \in C_v$. 
For the sake of analysis, we partition the vertices of the graph $G$ into layers. 
Let $L_1 = $ {\sc Opt} be the vertices in an optimal solution for the {\sc Pcds} instance, 
$L_2$ be the set of vertices that are not in $L_1$ and have at least one neighbor in $L_1$, and 
$R = V \setminus \{L_1 \cup L_2\}$ be the remaining vertices. 
Let $L_3$ be the subset of vertices  of  $R$ that have a neighbor in $L_2$. 
Furthermore let $L_i' = D \cap L_i, 1 \leq i \leq 3$ where $D$ is the 
dominating set chosen by the greedy algorithm. 
Figure~\ref{fig1} clarifies this notation regarding the layers $L_i$.

\begin{figure}[htbp]

\centering
\includegraphics[width=0.5\textwidth]{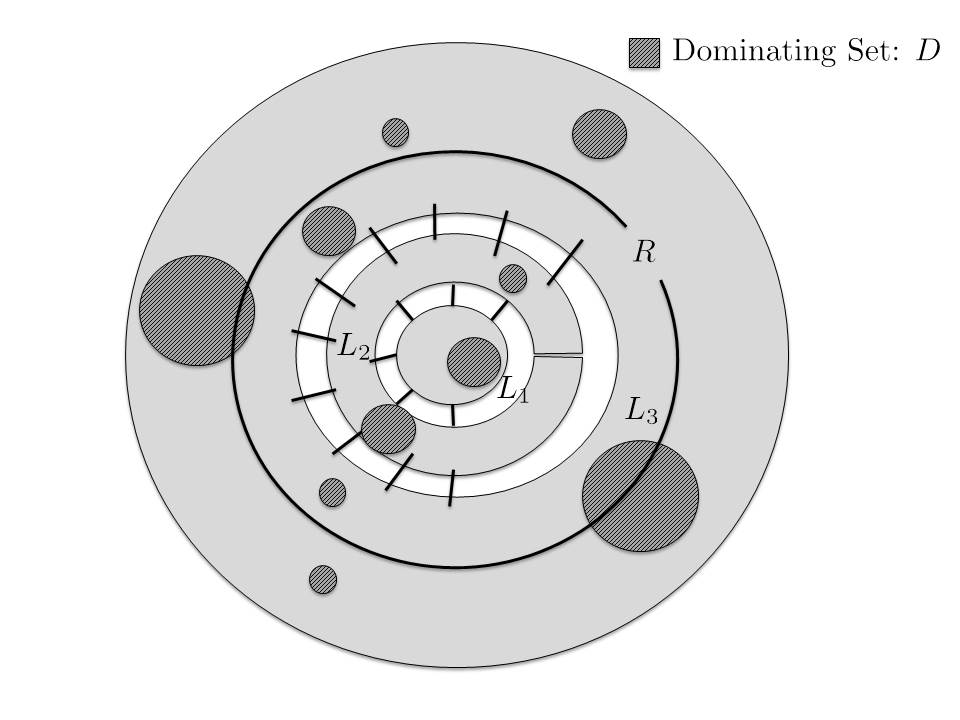}

\caption{Pictorial Representation of Different Layers. 
(a) $L_1$ is an optimal solution  (b) $L_2$ is set of the vertices adjacent to $L_1$ 
(c) $L_3$ is the subsequent layer (d) $R$ is the  set of 
all vertices other than $L_1\cup L_2$ and (e) $L_i' = L_i \cap D$.}
\label{fig1}
\end{figure}

We first show the following. 
\begin{lemma}
There is a subset $D' \subseteq L_1' \cup L_2' \cup L_3'$ such that $|D'| \leq |\text{\sc Opt}|\ln \Delta + 1$ and the total profit of vertices in $D'$ is at least $n'$, i.e. $\sum_{v \in D'} p(v) \geq n'$.
\label{claim:exists}
\end{lemma}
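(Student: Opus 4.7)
The plan is to construct $D'$ as a prefix of the ``useful'' greedy iterations, where an iteration is called useful if it newly covers at least one vertex of a fixed set $S$ of $n'$ vertices dominated by an optimal solution $\text{\sc Opt} = L_1$. Since $S \subseteq N[L_1] = L_1 \cup L_2$, and the vertex picked at a useful iteration must be equal or adjacent to some $s \in S$, every useful pick automatically lies in $N[L_1 \cup L_2] = L_1 \cup L_2 \cup L_3$, hence in $L_1' \cup L_2' \cup L_3'$. So the containment part of the lemma is free once $D'$ is specified as a prefix of the useful iterations.

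Next, I would establish the standard greedy lower bound on the profit of each useful pick. Let $S_j$ denote the number of uncovered $S$-vertices just before the $j$-th useful iteration, so $S_1 = n'$. Any $v^* \in \text{\sc Opt}$ that greedy has already placed into $D$ has had its entire $S$-neighborhood covered at the moment it was picked, so the $S_j$ still-uncovered $S$-vertices are dominated entirely by the not-yet-picked vertices of $\text{\sc Opt}$. Pigeonhole gives some $v^* \in \text{\sc Opt}\setminus D$ with $|N_U[v^*]| \ge S_j/|\text{\sc Opt}|$, and since greedy chooses the candidate of maximum marginal coverage in $V$, the $j$-th useful pick has profit $p_j \ge S_j/|\text{\sc Opt}|$.

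Given this, I would take $D' = \{v_{i_1}, \ldots, v_{i_t}\}$ with $t$ the smallest integer such that $\sum_{j \le t} p_j \ge n'$. Setting $R_j = n' - \sum_{i \le j} p_i$ and using that the $S$-coverage at any useful step is bounded above by its total profit, a telescoping argument yields $S_j \ge R_{j-1}$, hence $p_j \ge R_{j-1}/|\text{\sc Opt}|$ and so $R_j \le R_{j-1}(1 - 1/|\text{\sc Opt}|)$. Iterating gives $R_t \le n'(1 - 1/|\text{\sc Opt}|)^t \le n' e^{-t/|\text{\sc Opt}|}$, and the integrality of $R_t$ lets us conclude $R_t \le 0$ as soon as $R_t < 1$. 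Using $n' \le (\Delta+1)|\text{\sc Opt}|$ (each vertex of $\text{\sc Opt}$ dominates at most $\Delta+1$ vertices) together with a two-phase split---geometric decrease while $R_j \ge |\text{\sc Opt}|$, then a linear tail where $R_j$ drops by at least $1$ per useful step---brings $t$ down to the claimed $|\text{\sc Opt}|\ln\Delta + 1$.

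The main obstacle is pinning the bound down to exactly $|\text{\sc Opt}|\ln \Delta + 1$ rather than settling for $|\text{\sc Opt}|(\ln\Delta + O(1))$: the naive geometric analysis yields $t = O(|\text{\sc Opt}|\ln n')$, and tightening this requires leveraging $n'/|\text{\sc Opt}| \le \Delta+1$ together with the integrality of $R_t$. A subtler point is that greedy optimizes $V$-coverage rather than $S$-coverage and may ``waste'' profit on non-$S$ vertices at useful iterations; this turns out not to hurt us because the pivotal inequality $p_j \ge S_j/|\text{\sc Opt}|$ lower-bounds the total profit (which is what $D'$ accumulates toward $n'$), not the $S$-coverage.
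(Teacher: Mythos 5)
Your approach is genuinely different from the paper's, and most of it is sound. The containment of the ``useful'' picks in $L_1'\cup L_2'\cup L_3'$, the pigeonhole bound $p_j \geq S_j/|\text{\sc Opt}|$ (valid because a still-uncovered vertex of $S$ cannot lie in the closed neighborhood of any already-picked optimal vertex, so the not-yet-picked part of $\text{\sc Opt}$ dominates all $S_j$ of them and is still eligible for the greedy argmax), and the inequality $S_j \geq R_{j-1}$ are all correct, and together they give the recurrence $R_j \leq R_{j-1}(1-1/|\text{\sc Opt}|)$. The paper argues differently: it takes the prefix $\mathcal{S}$ of greedy picks inside $L_1'\cup L_2'\cup L_3'$ with profit below $n'$, charges $1/|C_u|$ to every vertex covered by a pick $u\in\mathcal{S}$, transfers the charges sitting on covered $R$-vertices injectively to \emph{uncovered} vertices of $L_1\cup L_2$ (possible precisely because the prefix's profit is below $n'\leq|L_1\cup L_2|$), and then runs the standard set-cover analysis separately inside each closed neighborhood $\mathcal{N}(v)$, $v\in L_1$, which has at most $\Delta+1$ vertices. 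That localization is what buys an $H(\Delta)$-type factor \emph{per optimal vertex} directly, with no reference to $n'$.

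The gap in your write-up is the final step. Your global recurrence only sees the aggregate deficit $R_j$, so the geometric phase gives $\ln(n'/|\text{\sc Opt}|)\leq\ln(\Delta+1)$ multiples of $|\text{\sc Opt}|$, and the tail where $R_j<|\text{\sc Opt}|$ (where you can only guarantee a decrease of $1$ per useful step) costs up to $|\text{\sc Opt}|$ additional steps; integrality does not remove this. So the two-phase split delivers $t \leq |\text{\sc Opt}|\ln(\Delta+1)+|\text{\sc Opt}|+O(1)$, not the claimed $|\text{\sc Opt}|\ln\Delta+1$, and the sentence asserting that the split ``brings $t$ down to the claimed bound'' is unsubstantiated; even the refinement $p_j\geq\lceil R_{j-1}/|\text{\sc Opt}|\rceil$ with a band-by-band count gives $|\text{\sc Opt}|H(\Delta+1)$ plus an additive $O(\Delta)$ term, still short of the stated constant. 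To recover the per-neighborhood $\ln\Delta$ you essentially need the paper's localized charging (or some other per-optimal-vertex accounting) rather than a single global deficit. In fairness, the damage is only in constants: your bound $|\text{\sc Opt}|(\ln\Delta+O(1))$ still yields the paper's $O(\ln\Delta)$ approximation for {\sc Pcds} (with $4\ln\Delta+O(1)$ in place of $4\ln\Delta+2+o(1)$), and the paper itself is loose here, since its charging honestly gives $H(\Delta)$ rather than $\ln\Delta$ per optimal vertex. But as a proof of the lemma exactly as stated, the last step is missing.
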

\begin{proof}
Let $L_1' \cup L_2' \cup L_3' = \{v_1, v_2, \ldots, v_l\}$ where the vertices are arranged according to the order in which they were selected by the greedy algorithm. Since all vertices in $L_1 \cup L_2$ are dominated by $L_1' \cup L_2' \cup L_3'$, we have $\sum_{i=1}^l p(v_i) \geq |L_1 \cup L_2| \geq n'$ where the second inequality follows from the fact that $L_1$ is a feasible solution (in fact optimal feasible solution). Choose $t$ such that $\sum_{i=1}^t p(v_i) < n'$ and $\sum_{i=1}^{t+1} p(v_i) \geq n'$. Let $\mathcal{S} = \{v_1, v_2, \ldots, v_t\}$ denote the set of the first $t$ vertices chosen from the set $L_1' \cup L_2' \cup L_3'$. We now show that $|\mathcal{S}| = t \leq |\text{\sc Opt}| \ln \Delta$ and hence $D' = \mathcal{S} \cup \{v_{t+1}\}$ satisfies the requirements of the claim.

Let $C_{12}$ be the set of vertices in $L_1\cup L_2$ that are covered by $\mathcal{S}$ in the original greedy step i.e., $C_{12} = \cup_{v\in \mathcal{S}} \{ C_v\cap(L_1\cup L_2)\}$. Let $UC_{12} = (L_1 \cup L_2) \setminus C_{12}$ be the vertices in $L_1\cup L_2$ that are not covered by $\mathcal{S}$. Similarly define $C_R = \cup_{v \in \mathcal{S}}\{ C_v \cap R\}$ as the set of vertices in $R$ covered by $\mathcal{S}$ (as per the greedy step).
Then, we have that $|C_R| + |C_{12}| < n' \leq |L_1 \cup L_2| = |C_{12}| + |UC_{12}|$, where the first inequality follows from the definition of $\mathcal{S}$. Therefore we have $|C_R| < |UC_{12}|$. 

We can thus assign every vertex in $C_R$ to a unique vertex in $UC_{12}$, i.e. let $I : C_R \rightarrow UC_{12}$ denote a one to one function from $C_R$ to $UC_{12}$. In the subsequent charging argument, any cost that we charge to a vertex $x\in C_R$ is transferred to its assigned vertex $I(x) \in UC_{12}$. Hence, after this charge transfer, only vertices in $L_1 \cup L_2$ will be charged.
We will now use a charging argument to show that $|\mathcal{S}| \leq |\text{\sc Opt}|\ln \Delta$.

Consider a vertex $u\in \mathcal{S}$. We recall that $C_u$ is the set of vertices covered for the first time by $u$ in the greedy step. We assign every $w\in C_u$ a charge $\rho(w) = \frac{1}{|C_u|}$. It is clear that the total charge on all vertices is equal to the size of $\mathcal{S}$. As described above, the charge of a vertex in $w\in R$ is transfered to its mapped vertex in $I(w)\in UC_{12}$. 
Let $v$ be a vertex in the optimal solution set $L_1$. We denote the set of neighbors of $v$, including itself, by $\mathcal{N}(v)$. We claim that the total charge on the vertices of $\mathcal{N}(v)$ is at most $\ln \Delta$. Initially, none of the vertices in $\mathcal{N}(v)$ are charged. Let $u_1, u_2\ldots, u_l$ be the vertices in $\mathcal{S}$ which charge some vertices of $\mathcal{N}(v)$ in that order. This charge could either be the direct charge or a transfer of charge from some vertex in $R$. 
For $i\in[l]$, let $O_i\subseteq \mathcal{N}(v)$ denote the set of vertices that remain uncharged (either directly or through a transfer), after the  vertex $u_i$ is picked into $\mathcal{S}$. 
Let $O_0 = \mathcal{N}(v)$. 

We will now show that, for every $u_i$, $|C_{u_i}| \geq |O_{i-1}|$. Let us consider the iteration of the greedy algorithm in which $u_i$ is picked.
We claim that none of the vertices in $O_{i-1}$ can be dominated  by any vertex chosen before $u_i$ in the greedy algorithm.  Let $w\in O_{i-1}$ be some vertex which is dominated by some vertex $u'$ chosen by greedy before $u_i$, such that $w\in C_{u'}$. Clearly $u'\in L_1'\cup L_2' \cup L_3'$ should hold, because no vertex in $R\setminus L_3$ can dominate $w$.  But since $u'$ was chosen before $u_i$ and $u'\in L_1'\cup L_2'\cup L_3'$, $u'$ must be chosen into $\mathcal{S}$ before $u_i$. Hence, $w$ cannot be an uncharged vertex in the current iteration leading to a contradiction. 
 
Thus, in the iteration where the greedy algorithm was about to choose $u_i$, none of the vertices $O_{i-1}$ have  been dominated. Hence if the greedy were to choose $v$, then $p(v) \geq |O_{i-1}|$. Since the greedy algorithm chooses vertex $u_i$ instead of $v$, we have $|C_{u_i}| \geq |O_{i-1}|$.

The total charge in this iteration ($C_{u_i} \cap \mathcal{N}(v)$) is thus at most $\frac{|O_{i-1}| - |O_{i}|}{|O_{i-1}|}$. Adding these charges over all $l$ iterations, we get, using an analysis very similar to the set cover analysis~\cite{cormengreedy}, $\sum_{w \in \mathcal{N}(v)} \rho(w) \leq H(\Delta)$, where $H$ is the harmonic function and $\Delta$ is the maximum degree.
Adding up the charges over all vertices in $L_1$, we get
 $\sum_{u \in C_{12} \cup UC_{12}} \rho(u) \leq \sum_{v \in L_1} \sum_{w \in \mathcal{N}(v)} \rho(w) \leq |\text{\sc Opt}| \ln \Delta$.  Hence we have $|\mathcal{S}| \leq |\text{\sc Opt}| \ln \Delta$. Since $\mathcal{S}$ was a maximal set having profit at most $n'$, we obtain a set $D'$ with $|D'| = |\mathcal{S}| + 1$ with profit at least $n'$ by adding a single vertex to $\mathcal{S}$, which gives us the desired result. \QEDA
\end{proof}

\begin{theorem}\label{thm:QST2}
Let {\sc Opt} be the optimal solution set for an instance of {\sc Pcds}. There exists a tree $\hat{T}$ with at most $2|\text{\sc Opt}|\ln \Delta +|\text{\sc Opt}|+ 1$ edges such that $\sum_{v\in \hat{T}} p(v) \geq n'$.  
\end{theorem}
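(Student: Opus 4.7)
My plan is to build $\hat{T}$ explicitly by starting from a spanning tree of the optimal solution and attaching to it the vertices guaranteed by Lemma~\ref{claim:exists}. The key observation is that every vertex of $L_1'\cup L_2'\cup L_3'$ lies within graph-distance $2$ of the (connected) set $L_1 = \text{\sc Opt}$: vertices in $L_1'$ already belong to $\text{\sc Opt}$, each vertex in $L_2'$ has a neighbor in $\text{\sc Opt}$ by the definition of $L_2$, and each vertex in $L_3'$ admits a length-$2$ path into $\text{\sc Opt}$ through some witness vertex of $L_2$. This is the only structural fact about the layering I will need.

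Concretely, I would first fix a spanning tree $T_0$ of the subgraph induced by $\text{\sc Opt}$, which uses $|\text{\sc Opt}|$ vertices and $|\text{\sc Opt}|-1$ edges. Invoking Lemma~\ref{claim:exists}, let $D'\subseteq L_1'\cup L_2'\cup L_3'$ satisfy $|D'|\le |\text{\sc Opt}|\ln\Delta+1$ and $\sum_{v\in D'} p(v)\ge n'$. I would then augment $T_0$ as follows: for each $v\in D'\cap L_2'$, add $v$ together with one edge to an arbitrary neighbor of $v$ in $\text{\sc Opt}$; for each $v\in D'\cap L_3'$, fix a witness $u\in L_2$ neighboring $v$ and a vertex $w\in L_1$ neighboring $u$, and add the vertices $v,u$ together with the two edges $\{v,u\}$ and $\{u,w\}$; vertices in $D'\cap L_1'$ need no attachment since they already belong to $T_0$. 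Let $H$ be the resulting subgraph and let $\hat{T}$ be any spanning tree of $H$. By construction $H$ is connected (each newly attached piece comes with an explicit path into $T_0$), $\hat{T}$ contains every vertex of $D'$, and therefore $\sum_{v\in \hat{T}} p(v)\ge \sum_{v\in D'} p(v)\ge n'$.

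For the size bound, the number of vertices added to $T_0$ is at most $|D'\cap L_2'|+2|D'\cap L_3'|\le 2|D'|\le 2|\text{\sc Opt}|\ln\Delta+2$, so $|V(\hat{T})|\le |\text{\sc Opt}|+2|\text{\sc Opt}|\ln\Delta+2$. Since $\hat{T}$ is a tree, its edge count equals $|V(\hat{T})|-1\le 2|\text{\sc Opt}|\ln\Delta+|\text{\sc Opt}|+1$, which is precisely the bound claimed by the theorem. I do not expect any single step to be genuinely hard: the only item requiring care is the bookkeeping of possibly repeated helper vertices (e.g., the same $u\in L_2$ serving as witness for multiple $L_3'$ vertices, or coinciding with some $L_2'$-attached vertex), but since we pass to a spanning tree at the end the crude estimate of at most two new vertices per $L_3'$ vertex is already sufficient, and any such repetitions only improve the bound.
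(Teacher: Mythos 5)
Your proof is correct and follows essentially the same route as the paper: invoke Lemma~\ref{claim:exists} to get $D'$, attach each $L_2'$-vertex directly and each $L_3'$-vertex through a witness in $L_2$, and use the connectivity of $L_1=\text{\sc Opt}$ as the backbone, with the same vertex/edge bookkeeping yielding $2|\text{\sc Opt}|\ln\Delta+|\text{\sc Opt}|+1$ edges. The only difference is presentational (you build the tree by explicit attachment to a spanning tree of $L_1$, whereas the paper takes the vertex set $D'\cup D''\cup L_1$ and then a spanning subtree), which changes nothing of substance.
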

\begin{proof}
In Lemma~\ref{claim:exists}, we have shown that there exists a subset $D' \subseteq L_1' \cup L_2' \cup L_3'$ of size $|\text{\sc Opt}|\ln \Delta + 1$ that has profit at least $n'$. However this set $D'$ need not be connected. We now show that this set $D'$ can be connected without paying too much.
 Firstly we note that for every vertex $v\in L_3\cap D'$, there exists a vertex $w\in L_2$ such that $w$ dominates $v$. Thus we can pick a subset $D'' \subseteq L_2$ of size at most $|L_3\cap D'|\leq |\text{\sc Opt}|\ln \Delta + 1$ which dominates all vertices of $L_3\cap D'$.  
Now, it is sufficient to ensure that all the vertices of $(D'\cap L_2)\cup D''$ are connected. This can be achieved by simply adding all the vertices of $L_1$ to our solution. 
Thus we have shown that $\hat{D} = D'\cup D'' \cup L_1$ induces a connected subgraph with profit at least $n'$ and the number of vertices in $\hat{D} \leq |D'| + |D''| + |L_1| \leq 2|\text{\sc Opt}|\ln \Delta + |\text{\sc Opt}| + 2$. Hence there exists subtree $\hat{T}$ on these vertices with at most $(2\ln \Delta +1)|\text{\sc Opt}| + 1$ edges with the requisite total profit. \QEDA
\end{proof}

\begin{corollary}
  Algorithm \ref{algm:PCDS} is a $4\ln \Delta + 2 + o(1)$-approximation algorithm for {\sc Pcds}.
\end{corollary}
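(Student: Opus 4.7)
The plan is to combine Theorem \ref{thm:QST2} with the 2-approximation for the Quota Steiner Tree problem (Theorem \ref{thm:QST}), noting that Algorithm \ref{algm:PCDS} runs the QST subroutine on an instance with unit edge costs and the greedy-induced profit function $p$.

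First, I would observe that the instance of QST that Algorithm \ref{algm:PCDS} solves has unit edge costs and quota $n'$. Theorem \ref{thm:QST2} guarantees the existence of a feasible solution $\hat{T}$ to this instance whose cost (number of edges) is at most $(2\ln \Delta + 1)|\text{\sc Opt}| + 1$. Hence the optimum QST cost on this instance is bounded above by the same quantity.

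Next, I would invoke Theorem \ref{thm:QST} applied to the 2-approximate algorithm used in Line 2 of Algorithm \ref{algm:PCDS}. The returned tree $T$ thus has at most
\[
2\bigl((2\ln \Delta + 1)|\text{\sc Opt}| + 1\bigr) = (4\ln \Delta + 2)|\text{\sc Opt}| + 2
\]
edges, and consequently at most $(4\ln \Delta + 2)|\text{\sc Opt}| + 3$ vertices. Since the profit function $p$ is defined so that the total profit of $T$ lower bounds the number of vertices of $G$ dominated by $V(T)$ (each vertex $v \in D$ contributes $p(v) = |C_v|$, and the sets $C_v$ are disjoint), the tree $T$ is a feasible solution to the {\sc Pcds} instance of size at most $(4\ln \Delta + 2)|\text{\sc Opt}| + 3$.

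Dividing by $|\text{\sc Opt}|$ yields an approximation ratio of $4\ln \Delta + 2 + 3/|\text{\sc Opt}|$, which is $4\ln \Delta + 2 + o(1)$ as $|\text{\sc Opt}|$ grows. The only subtle point, and the step I would double-check, is the translation between the QST cost (edges) and the size of the PCDS solution (vertices): this accounts for the additive $+1$ in Theorem \ref{thm:QST2} (turning a vertex count bound into an edge count bound) and for the additional $+1$ when converting the returned tree's edge count back to a vertex count, which together produce the harmless lower-order $o(1)$ term.
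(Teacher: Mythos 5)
Your proposal is correct and follows essentially the same route as the paper: invoke Theorem~\ref{thm:QST2} to bound the optimal Quota Steiner Tree cost and then apply the 2-approximation of Theorem~\ref{thm:QST}, absorbing the additive constants into the $o(1)$ term. Your extra care in spelling out the edge-versus-vertex count and why the disjoint sets $C_v$ make the returned tree a feasible {\sc Pcds} solution is detail the paper leaves implicit, but it is the same argument.
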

\begin{proof}
Let {\sc Opt} be the optimal solution of the {\sc Pcds} instance.  As per Theorem \ref{thm:QST2}, we know that there exists a Steiner tree $\hat{T}$ with at most $2|\text{\sc Opt}|\ln \Delta + |\text{\sc Opt}| + 1$ edges whose total profit exceeds the quota $n'$. Hence, the tree $T$ returned by the 2-approximation for the {\sc Qst} problem has at most $4|\text{\sc Opt}|\ln \Delta + 2|\text{\sc Opt}| +2$ edges. Thus, we obtain a $4\ln \Delta + 2 + o(1)$ approximation algorithm. \QEDA
\end{proof}

\section{Budgeted Connected Dominating Set}
\label{sec:budgeted}
We now turn our attention to the Budgeted Connected Dominating Set ({\sc Bcds}) problem. We recall that in the {\sc Bcds} problem, we have to choose at most $k$ vertices that induce a connected subgraph and maximize the number of dominated vertices.

\subsection{Algorithm}
Algorithm~\ref{algm:BCDS} is very similar to the one we used to obtain a partial connected dominating set. We start by running the standard greedy algorithm to find a dominating set $D$ in the graph. 
We set the profits of vertices in $D$ as the number of newly covered vertices 
at each step of the greedy algorithm, 
while we assign zero profit for the remaining vertices in $V \setminus D$.
In the analysis section, we show that there is a tree on at most $3k$ vertices that has a total profit of at least $(1-\frac{1}{e})\text{\sc Opt}$ where $\text{\sc Opt}$ is the number of vertices dominated by an optimal solution. Note that we may assume that we have guessed $\text{\sc Opt}$ by trying out values between $k$ and $n$ using, say, binary search.
We run the 2 approximation algorithm for the Quota Steiner tree problem on this instance with the quota being set to $(1-\frac{1}{e})\text{\sc Opt}$.  
This will result in a tree with at most $6k$  nodes with total profit at least $(1-\frac{1}{e})\text{\sc Opt}$. Thus we obtain a $(6, 1-\frac{1}{e})$ bicriteria approximation algorithm. 
To convert this bicriteria approximation into a true approximation, we use a dynamic program (Section~\ref{sec:finding-best-subtree}) to find the \lq\lq{}best\rq\rq{} subtree on at most $k$ vertices from this tree of $6k$ vertices. We use a simple tree decomposition scheme to show that the best tree dominates at least $\frac{1}{13}(1- \frac{1}{e})\text{\sc Opt}$ nodes.

\vspace{4mm}
\hrule
\vspace{1mm}
\hrule

\begin{algorithm}{\sc Greedy Profit Labeling Algorithm for {\sc Bcds}.}
\label{algm:BCDS}

\noindent {\bf Input:} Graph $G=(V,E)$ and $k\in \mathbb{N}$.

\noindent {\bf Output:} Tree $\tilde{T}$ with cost at most $k$.

\vspace{1mm}
\hrule
\vspace{1mm}
\begin{algorithmic}[1]
\State Compute the greedy dominating set $D$ and the corresponding profit function $p:V\rightarrow \mathbb{N}$ using the Algorithm~\ref{algm:greedy}.
\State $\text{\sc Opt} \leftarrow $ number of vertices dominated by an optimal solution. \Comment Guess using binary search between $k$ and $n$
\State Use the  2-approximation for {\sc Qst} problem~\cite{johnson2000prize} to obtain a tree $T$ with profit at least $(1 - \frac{1}{e})\text{\sc Opt}$. \Comment 
We show that $|T| \leq 6k$.
\State Use the dynamic program of Section \ref{sec:finding-best-subtree} to find $\tilde{T}$, the best subtree of $T$ having at most $k$ vertices.

\end{algorithmic}
\vspace{1mm}
\hrule
\vspace{1mm}
\hrule
\end{algorithm}

\subsection{Analysis}
Let $L_1$ denote the vertices in an optimal solution. Let layers $L_2$, $L_3$, $R$, and $L_i'$ be defined as in Section~\ref{sec:partial}. $\text{\sc Opt} = |L_1 \cup L_2|$ is the number of vertices dominated by the optimal solution.

Let $L_1' \cup L_2' \cup L_3' = \{v_1, v_2, \ldots, v_l\}$ where the vertices are according to the order in which they were selected by the greedy algorithm. Let $D' = \{v_1, v_2, \ldots, v_k\}$ denote the first $k$ vertices from $L_1' \cup L_2' \cup L_3'$. In Lemma~\ref{lem:budgeted}, we prove that the total profit of $D' = \sum_{v \in D'} p(v)$ is at least $(1 - \frac{1}{e}) \text{\sc Opt}$. Next, we can show that these $k$ vertices can be connected by using at most $2k$ more vertices, thus proving the existence of a tree with at most $3k$ vertices having the desired total profit.

Let $g_i$ denote the total profit after picking the first $i$ vertices from $D'$, i.e., $g_i = \sum_{j=1}^i p(v_j)$. We start by proving that the following recurrence holds for every $i = 0\text{ to }k-1$. 

\begin{claim}
\label{claim:maxcov}
 $g_{i+1} - g_{i} \geq \frac{1}{k} (\text{\sc Opt} - g_i)$   
\end{claim}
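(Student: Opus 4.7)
The plan is to adapt the standard one-step inequality from the greedy max-coverage analysis, localized to the vertices of $L_1 \cup L_2$ and to the sub-sequence $v_1, v_2, \ldots$ of greedy picks drawn from $L_1' \cup L_2' \cup L_3'$. The structural observation that enables this localization is that \emph{no greedy pick outside $L_1' \cup L_2' \cup L_3'$ can ever newly cover a vertex of $L_1 \cup L_2$}: any such pick lies in $R \setminus L_3$, but by the layer definitions, neighbors of $L_1$ lie in $L_1 \cup L_2$ while neighbors of $L_2$ lie in $L_1 \cup L_2 \cup L_3$, so vertices in $R \setminus L_3$ have no neighbor in $L_1 \cup L_2$ at all.

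I will then freeze the greedy algorithm at the moment immediately before $v_{i+1}$ is selected. Let $D_{\mathrm{cur}}$ denote the greedy picks so far (which may include picks from $R \setminus L_3$ interleaved after $v_i$), let $U$ be the current uncovered set, and set $U_{12} = U \cap (L_1 \cup L_2)$. Writing $g_i^{12} = \sum_{j \le i} |C_{v_j} \cap (L_1 \cup L_2)|$, the observation above gives $|U_{12}| = \text{\sc Opt} - g_i^{12} \ge \text{\sc Opt} - g_i$. Next, since every $u' \in L_1 \cap D_{\mathrm{cur}}$ had $N(u')$ fully covered at the moment it was picked, $U_{12}$ is disjoint from $\bigcup_{u' \in L_1 \cap D_{\mathrm{cur}}} N(u')$; and as $L_1$ dominates all of $L_1 \cup L_2$, every vertex of $U_{12}$ is therefore dominated by some element of $L_1 \setminus D_{\mathrm{cur}}$. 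Averaging over that set, whose size is at most $|L_1| \le k$, yields a vertex $u^\star \in L_1 \setminus D_{\mathrm{cur}}$ with $|N(u^\star) \cap U_{12}| \ge |U_{12}|/k$.

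Because $u^\star \in V \setminus D_{\mathrm{cur}}$, it is a legal candidate at the current greedy step, so the greedy choice $v_{i+1}$ must satisfy $p(v_{i+1}) = |N_U(v_{i+1})| \ge |N_U(u^\star)| \ge |N(u^\star) \cap U_{12}| \ge |U_{12}|/k \ge (\text{\sc Opt} - g_i)/k$, which is exactly the claimed recurrence. The only real subtlety I anticipate is the layering bookkeeping in the first paragraph: one must verify carefully that the picks interleaved from $R \setminus L_3$ do not corrupt the equality $|U_{12}| = \text{\sc Opt} - g_i^{12}$, and that already-picked elements of $L_1$ drop out of the numerator and the denominator of the pigeonhole argument in a consistent way. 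Once that accounting is in place, the remainder is essentially the textbook one-step inequality for max-coverage.
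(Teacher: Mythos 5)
Your proof is correct and follows essentially the same route as the paper's: observe that only picks from $L_1' \cup L_2' \cup L_3'$ can cover vertices of $L_1 \cup L_2$, so at least $\text{\sc Opt} - g_i$ such vertices remain uncovered, then average over $L_1$ and compare the greedy choice $v_{i+1}$ against the best candidate in $L_1$. Your only addition is the explicit restriction of the averaging to $L_1 \setminus D_{\mathrm{cur}}$ (handling already-picked optimal vertices), a detail the paper leaves implicit since such vertices neighbor no uncovered vertex anyway.
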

\begin{proof}
Consider the iteration of the greedy algorithm, where vertex $v_{i+1}$ is being picked. We first show that at most $g_i$ vertices of $L_1 \cup L_2$ have been already been dominated. Note that any vertex $w \in L_1 \cup L_2$ that has been already dominated must have been dominated by a vertex in $\{v_1, v_2, \ldots v_i\}$. This is because no vertex from $R \setminus L_3$ can neighbor $w$. Since $g_i = \sum_{j=1}^i p(v_j)$ is the total profit gained so far, it follows that at most $g_i$ vertices from $L_1 \cup L_2$ have been dominated. Hence we have that there are at least $\text{\sc Opt} - g_i$ undominated vertices in $L_1 \cup L_2$. Since the $k$ vertices of $L_1$ together dominate all of these, it follows that there exists at least one vertex $v \in L_1$ which neighbors at least $\frac{1}{k}(\text{\sc Opt} - g_i)$ undominated vertices. 

We conclude this proof by noting that since the greedy algorithm chose to pick $v_{i+1}$ at this stage, instead of the $v$ above, it follows that $p(v_{i+1}) = g_{i+1} - g_i \geq \frac{1}{k}(\text{\sc Opt} - g_i)$. \QEDA

\end{proof}

\begin{lemma} \label{lem:budgeted}
Let $\text{\sc Opt}$ be the number of vertices dominated by an optimal solution for {\sc Bcds}. Then there exists a subset $D' \subseteq D$ of size $k$ with total profit at least $(1 - \frac{1}{e}) \text{\sc Opt}$. Further, $D'$ can be connected using at most $2k$ 
Steiner vertices. 
\end{lemma}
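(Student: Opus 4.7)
The plan is to prove the two assertions of Lemma~\ref{lem:budgeted} separately, using Claim~\ref{claim:maxcov} for the profit bound and the layered decomposition of Section~\ref{sec:partial} for the connectivity bound.

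For the profit bound, I would iterate the recurrence from Claim~\ref{claim:maxcov} in exactly the manner that yields the classical greedy guarantee for maximum coverage. Rewriting $g_{i+1} - g_i \geq \tfrac{1}{k}(\text{\sc Opt} - g_i)$ as $\text{\sc Opt} - g_{i+1} \leq (1 - \tfrac{1}{k})(\text{\sc Opt} - g_i)$ and unrolling from $g_0 = 0$ gives $\text{\sc Opt} - g_k \leq (1 - 1/k)^k \text{\sc Opt} \leq \text{\sc Opt}/e$, hence $g_k \geq (1 - 1/e)\text{\sc Opt}$. This is a routine telescoping and I do not expect any subtlety here.

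For the connectivity bound, I would exploit the layered structure. Because $D' \subseteq L_1' \cup L_2' \cup L_3'$, every $v \in D' \cap L_3'$ lies in $R$ and (by the definition of $L_3$) has at least one neighbor in $L_2$; I would pick one such witness per vertex to form an auxiliary set $D'' \subseteq L_2$ with $|D''| \leq |D' \cap L_3'| \leq k$. After adjoining $D''$, every vertex of $D' \cup D''$ either lies in $L_1$, lies in $L_2$, or is adjacent to some vertex of $L_2$; and by definition each vertex of $L_2$ is adjacent to $L_1$. Since $L_1$ itself induces a connected subgraph of $G$ of size at most $k$, using all of $L_1$ as a backbone connects $D' \cup D''$. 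The total number of Steiner vertices added is at most $|L_1| + |D''| \leq k + k = 2k$, as required.

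The step I expect to be the main (minor) obstacle is the $L_3'$ part of the connectivity argument: these are the only members of $D'$ that are not themselves in or adjacent to $L_1$, so the bookkeeping hinges on the fact that one $L_2$-witness per $L_3'$-vertex is enough and that these witnesses, together with $L_1$, really do form a single connected subgraph spanning $D' \cup D''$. Once this is verified, combining the two parts gives the existence of a connected subgraph on at most $3k$ vertices whose profit is at least $(1 - 1/e)\text{\sc Opt}$, which is exactly what the subsequent {\sc Qst} call will need.
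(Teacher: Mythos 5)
Your proposal is correct and follows essentially the same route as the paper: the profit bound comes from unrolling the recurrence of Claim~\ref{claim:maxcov} exactly as the paper does, and the connectivity bound uses the same construction of one $L_2$-witness per vertex of $D'\cap L_3$ plus all of $L_1$ (which is connected since it is a feasible {\sc Bcds} solution), giving at most $k+k=2k$ Steiner vertices. No gaps.
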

\begin{proof}
From the Claim~\ref{claim:maxcov}, the profit after $i+1$ iterations is given by 

\begin{equation*}
g_{i+1} \geq \frac{\text{\sc Opt}}{k} + g_i(1 - \frac{1}{k}).
\end{equation*} 
By solving this recurrence, we get  
$g_{i} \geq (1 - (1 - \frac{1}{k})^i) \text{\sc Opt}$. Hence, we obtain the following.
\[\sum_{v \in D'} p(v) = g_k \geq (1 - (1 - \frac{1}{k})^k ) \text{\sc Opt} \geq (1 - \frac{1}{e}) \text{\sc Opt}\]

We show that $D'$ can be connected by at most $2k$ Steiner 
nodes to form a connected tree.
Note that for every vertex $v\in L_3\cap D'$, there exists a vertex $w\in L_2$ such that $w$ neighbors $v$. Thus we can pick a subset $D'' \subseteq L_2$ of size at most $|L_3\cap D'|\leq k$ which dominates all vertices of $L_3\cap D'$. Now, it is sufficient to ensure that all the vertices of $(D'\cap L_2)\cup D''$ are connected. This can be achieved by simply adding all the $k$ vertices of $L_1$. Thus we have shown that $\hat{D} = D'\cup D'' \cup L_1$ induces a connected subgraph with profit at least $(1 - \frac{1}{e}) \text{\sc Opt}$ and $|\hat{D}| \leq |D'| + |D''| + |L_1| \leq 3k$. \QEDA
\end{proof}
\begin{lemma}
\label{lemma:biapprox}
There is a $(6, (1-\frac{1}{e}))$ bicriteria approximation algorithm for the {\sc Bcds} problem. 
\end{lemma}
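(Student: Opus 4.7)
The plan is to combine Lemma~\ref{lem:budgeted} with the $2$-approximation for {\sc Qst} (Theorem~\ref{thm:QST}) in essentially the same way the {\sc Pcds} analysis works, and then observe that the resulting tree satisfies the bicriteria guarantee. I would first note that we can assume the value $\text{\sc Opt}$ (the number of vertices dominated by an optimal {\sc Bcds} solution) is known: since $k \leq \text{\sc Opt} \leq n$, we can simply try each of the $O(n)$ candidate values and return the best feasible solution, losing only a polynomial factor in running time.

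Next, I would feed this guessed quota $(1-\tfrac{1}{e})\text{\sc Opt}$ into the $2$-approximation for the Quota Steiner Tree problem, using the profit function $p$ produced by the greedy dominating set routine (Algorithm~\ref{algm:greedy}) and unit edge costs. The key lower bound on the optimum of this {\sc Qst} instance is supplied by Lemma~\ref{lem:budgeted}: the set $\hat D = D' \cup D'' \cup L_1$ induces a connected subgraph on at most $3k$ vertices whose total $p$-profit is at least $(1-\tfrac{1}{e})\text{\sc Opt}$. Any spanning tree of $\hat D$ is therefore a feasible Steiner tree for the {\sc Qst} instance of cost at most $3k - 1$. Invoking Theorem~\ref{thm:QST}, the returned tree $T$ has at most $2(3k-1) < 6k$ edges, hence at most $6k$ vertices, and its $p$-profit is at least $(1-\tfrac{1}{e})\text{\sc Opt}$.

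Finally, I would translate the $p$-profit of $T$ into an actual dominated-vertex count. By construction of the greedy algorithm, the sets $\{C_v : v \in D\}$ are pairwise disjoint and $C_v \subseteq N(v)$, so every vertex of $T \cap D$ dominates at least $p(v)$ of the vertices counted in $\sum_{v \in T} p(v)$, and these contributions are disjoint. Consequently, the number of vertices dominated by $T$ is at least $\sum_{v \in T} p(v) \geq (1-\tfrac{1}{e})\text{\sc Opt}$. This gives a tree using at most $6k$ vertices (violating the budget by a factor of $6$) that dominates at least a $(1-\tfrac{1}{e})$ fraction of the optimum, which is exactly the $(6,(1-\tfrac{1}{e}))$ bicriteria guarantee claimed.

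There is no substantive obstacle beyond carefully assembling these pieces; the only points to be precise about are that (i) guessing $\text{\sc Opt}$ is legitimate because it takes only polynomially many values, and (ii) the $p$-profit of $T$ lower-bounds the true domination count because greedy assigns disjoint credit. The heavy lifting has already been done by Lemma~\ref{lem:budgeted} and Theorem~\ref{thm:QST}.
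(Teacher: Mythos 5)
Your proposal is correct and follows essentially the same route as the paper: invoke Lemma~\ref{lem:budgeted} for the existence of a connected $3k$-vertex set of profit at least $(1-\frac{1}{e})\text{\sc Opt}$, then apply the $2$-approximation of Theorem~\ref{thm:QST} to get a tree on at most $6k$ vertices meeting the quota. Your added remarks on enumerating the $O(n)$ candidate values of $\text{\sc Opt}$ and on the disjointness of the sets $C_v$ (so that profit lower-bounds the actual domination count) are correct details that the paper leaves implicit.
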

\begin{proof}
Lemma~\ref{lem:budgeted} shows that there exists a Steiner tree with at most $3k$ vertices having total profit greater than a quota of $(1-\frac{1}{e})\text{\sc Opt}$. Hence, using the $2$-approximation for the {\sc Qst} problem, we obtain a tree $T$ of at most $6k$ nodes and total profit at least $(1 - \frac{1}{e})\text{\sc Opt}$. Thus we obtain a (6, $(1 - \frac{1}{e})$) bicriteria approximation algorithm for the {\sc Bcds} problem. \QEDA
\end{proof}
\subsubsection{Converting the Bicriteria Approximation to a True Approximation}
In order to obtain a true approximate solution (solution of size $k$), we need a technique to find a small subtree $\tilde{T} \subseteq T$ of $k$ vertices which has high total profit. 
In Section~\ref{sec:finding-best-subtree}, we show that this problem can be easily solved in polynomial time using dynamic programming. However, simply finding the subtree which maximizes the profit is not enough to give a good approximation ratio. We need a way to compare the total profit of the subtree $\tilde{T}$ with the entire profit $P = \sum_{v \in T} p(v)$. We now show that if $n = 6k$, we can obtain a subtree having profit at least $\frac{1}{13} P$.
%

The following lemma is well known in folklore and can be easily proven by induction.  It can also be seen as an easy consequence of a theorem by Jordan~\cite{jordan}.
\begin{lemma}[Jordan~\cite{jordan}]
\label{lemma:folklore} 
Given any tree on $n$ vertices, we can decompose it into two trees (by replicating a single vertex) such that the smaller tree has at most $\left \lceil \frac{n}{2} \right \rceil $ nodes and the larger tree has at most $\left \lceil \frac{2n}{3} \right \rceil$ nodes.
\end{lemma}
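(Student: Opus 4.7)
The plan is to exhibit the decomposition via the classical tree centroid. First I will invoke the folklore fact that every tree on $n$ vertices has a centroid vertex $v$ whose removal breaks the tree into components each of size at most $\lfloor n/2 \rfloor$. (One proof: start anywhere and repeatedly step into the neighbor whose hanging component is largest; this must terminate at such a $v$.) I will use this $v$ as the replicated vertex, so the two output trees are obtained by partitioning the subtrees $T_1, \ldots, T_k$ hanging off $v$ (with sizes $s_1 \geq s_2 \geq \cdots \geq s_k$ summing to $n-1$) into two groups and attaching a copy of $v$ to each group.

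Let the two resulting group sums be $\alpha \leq \beta$, so the output trees have $\alpha + 1$ and $\beta + 1$ vertices. Because $\alpha + \beta = n - 1$ and $\alpha \leq \beta$, the smaller tree automatically satisfies $\alpha + 1 \leq \lceil n/2 \rceil$; the content of the proof is therefore to force $\beta + 1 \leq \lceil 2n/3 \rceil$.

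I will split into two cases based on the size of the largest subtree. If $s_1 \geq \lceil n/3 \rceil$, I place $T_1$ alone on one side and dump the rest on the other. The side containing $T_1$ has $s_1 \leq \lfloor n/2 \rfloor$ vertices (from the centroid property) while the other has $(n-1) - s_1 \leq n - 1 - \lceil n/3 \rceil$ vertices; both fit inside $\lceil 2n/3 \rceil - 1$. If instead $s_1 < \lceil n/3 \rceil$, I will run the standard longest-processing-time greedy, assigning each $T_i$ in decreasing order of size to the currently smaller group. The usual invariant for this greedy (at the moment the final addition was made, the target bin was no larger than the other) yields $\beta \leq \tfrac{1}{2}(S + s_1)$ with $S = n - 1$; combined with $s_1 \leq \lceil n/3 \rceil - 1$, this delivers the desired bound.

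The main obstacle, as is typical for such tight combinatorial statements, is merely matching the ceiling/floor arithmetic in each residue class of $n$ modulo $3$ (and the two parities of $n$). I would handle this by verifying the inequality $\tfrac{1}{2}(n - 1 + \lceil n/3 \rceil - 1) \leq \lceil 2n/3 \rceil - 1$ separately for $n \equiv 0, 1, 2 \pmod 3$; each reduces to a direct numerical check, and no further idea beyond centroid plus greedy packing is required.
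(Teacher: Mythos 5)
Your proof is correct: the centroid guarantees every hanging component has at most $\lfloor n/2 \rfloor$ vertices, the LPT-style packing invariant gives $\beta \le \tfrac{1}{2}\left(n-1+s_1\right)$, and the residue-class checks (together with the large-subtree case where $T_1$ is isolated) do close the gap to $\lceil 2n/3 \rceil - 1$, while the smaller side is bounded by $\lceil n/2 \rceil$ automatically from $\alpha \le \beta$ and $\alpha + \beta = n-1$. This is essentially the route the paper itself points to --- it gives no proof, only remarking that the lemma is folklore, provable by induction, and an easy consequence of Jordan's centroid theorem --- so your centroid-plus-greedy-packing argument simply supplies the details the paper omits.
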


We now show the following - 
\begin{lemma}
\label{lemma:decomp}
Let $k$ be greater than a sufficiently large constant. Given a tree $T$ with $6k$ nodes, we can decompose it into $13$ trees of size at most $k$ nodes each.
\end{lemma}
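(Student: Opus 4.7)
The plan is to prove the decomposition by iterated application of Jordan's lemma (Lemma~\ref{lemma:folklore}). Starting from the collection $\mathcal{C} = \{T\}$, I would repeatedly select any tree in $\mathcal{C}$ whose size exceeds $k$ and apply Jordan's lemma to replace it by its two resulting subtrees (which share at most one replicated vertex). Since every such split strictly decreases the size of the chosen tree, the process terminates with a collection of trees each of size at most $k$, so the only remaining task is to bound the final size of $\mathcal{C}$ by $13$.

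To bound the count, I would define $\psi(n)$ to be the maximum number of pieces this procedure can produce on a tree of $n$ vertices. Jordan's lemma gives the recurrence
\begin{equation*}
\psi(n) \leq \psi(a) + \psi(b), \qquad a + b = n+1, \quad a \leq \lceil n/2 \rceil, \quad b \leq \lceil 2n/3 \rceil,
\end{equation*}
together with the base case $\psi(n) = 1$ for $n \leq k$. I would then climb through a sequence of size ranges, establishing (for $k$ larger than a sufficiently large constant) that $\psi(n) \leq 2$ for $n \leq 3k/2$, $\psi(n) \leq 3$ for $n \leq 2k$, $\psi(n) \leq 5$ for $n \leq 3k$, and $\psi(n) \leq 8$ for $n \leq 4k$. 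Each step of the chain feeds the recurrence bounds $a \leq \lceil n/2 \rceil$ and $b \leq \lceil 2n/3 \rceil$ into the previously established ranges. Applying the recurrence at $n = 6k$ then yields $a \leq 3k$ and $b \leq 4k$, giving $\psi(6k) \leq 5 + 8 = 13$, as desired.

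The main obstacle is simply the bookkeeping needed to keep the bound tight at exactly $13$ and not larger. The ceilings in $\lceil n/2 \rceil$ and $\lceil 2n/3 \rceil$ produce small $O(1)$ corrections that could nudge a sub-piece into a worse bucket at the boundary of a range, which would propagate up the induction and inflate the final constant. The assumption that $k$ exceeds a sufficiently large constant is precisely what lets these ceiling corrections be absorbed into the range bounds, so that every split at $n \leq 6k$ lands cleanly in the buckets above, and the recursion telescopes to the claimed value $13$.
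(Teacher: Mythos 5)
Your overall strategy---iterating Lemma~\ref{lemma:folklore} and bounding the number of resulting pieces by size buckets---is the same as the paper's, and the final constant $13$ is attainable this way, but one link in your chain does not follow from the lemma, and your stated reason for dismissing the ceilings is not valid. The bucket claim that your $\psi(n)\le 5$ for all $n\le 3k$ fails at the boundary: for odd $k$, a tree on exactly $3k$ vertices may, consistently with the lemma's guarantee (smaller piece of size at most $\lceil 3k/2\rceil=(3k+1)/2$, larger at most $2k$, sizes summing to $3k+1$), be split into two pieces of $(3k+1)/2$ vertices each; for such a piece the lemma only guarantees a larger part of size at most $\lceil\frac{2}{3}\cdot\frac{3k+1}{2}\rceil=\lceil\frac{3k+1}{3}\rceil=k+1$, which can exceed $k$ and force one further split, so each half may cost $3$ pieces and the black-box recurrence only yields $\psi(3k)\le 6$, not $5$. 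This is not an effect that ``sufficiently large $k$'' absorbs: the offending offsets sit exactly at the bucket boundaries ($3k/2$, $3k$, $\ldots$) and are parity artifacts present for every $k$; taking $k$ large only helps with comparisons such as $\lceil 4k/3\rceil\le\lfloor 3k/2\rfloor$, not with a piece of size $(3k+1)/2>3k/2$ or a piece landing exactly on $3k$.

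The statement survives a repaired bookkeeping, which is essentially what the paper does. Either refine your buckets---$\psi(n)\le 2$ for $n\le\lfloor 3k/2\rfloor$, $\le 3$ for $n\le 2k$, $\le 5$ only for $n\le 3k-1$, $\le 6$ for $n\in\{3k,3k+1\}$, $\le 8$ for $n\le 4k$---and then handle the top split of the $6k$-vertex tree in two cases: if its smaller piece $a$ satisfies $a\le 3k-1$, the larger piece is at most $4k$ and you get at most $5+8=13$; if $a=3k$, the other piece is exactly $3k+1$ and you get at most $6+6=12$. Alternatively, follow the paper, which cases directly on the actual size of $T_1$: if $|T_1|\ge 3k-1$ then $|T_2|\le 3k+2$ and one decomposes $T_1$ into at most $6$ and $T_2$ into at most $7$ subtrees, while if $|T_1|\le 3k-2$ then $|T_2|\le 4k$ and one decomposes $T_1$ into at most $5$ and $T_2$ into at most $8$ subtrees, tracking the exact size bound at every node of the recursion instead of relying on a uniform bucket at $3k$.
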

\begin{proof}
We use Lemma~\ref{lemma:folklore} to decompose the tree into two trees $T_1$ and $T_2$ such that $|T_1| \leq |T_2|$. In this decomposition, at most one vertex is duplicated, therefore $|T_1| + |T_2| \leq 6k+1$. Also, we have $|T_1| \leq  3k$. We now have two cases:

\noindent {\bf Case 1: $|T_1| \geq 3k-1$}. In this case, $|T_2| \leq 6k+1 -|T_1| \leq 3k+2$. Now repeatedly using Lemma~\ref{lemma:folklore} we can see that $T_1$ can be decomposed into at most $6$ trees and $T_2$ can be decomposed into at most $7$ trees of size at most $k$. This is  shown in the Figure~\ref{fig:case1}. Hence, in this case, we can decompose the tree $T$ into $13$ trees.

\noindent {\bf Case 2: $|T_1| \leq 3k-2$}. In this case, $|T_2| \leq 4k$. In this case, we can decompose $T_1$ into $5$ trees and $T_2$ can be decomposed into $8$ trees. This is shown in Figure~\ref{fig:case2}.  Thus in this case, we can decompose $T$ into $13$ trees. \QEDA
\end{proof}
\begin{figure}[ptbh]
\centering
\begin{subfigure}{0.4\textwidth}
  \includegraphics[width=\textwidth]{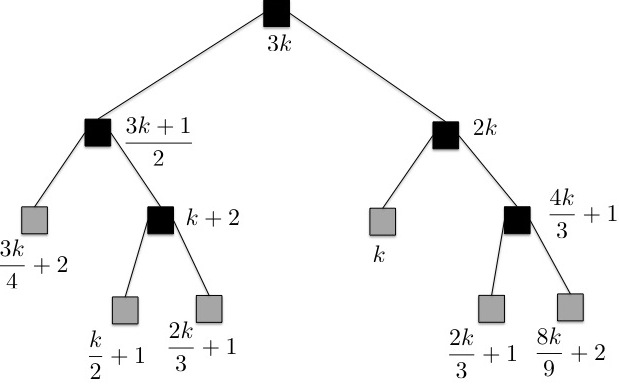}
\end{subfigure}
\quad \quad \quad
\begin{subfigure}{0.4\textwidth}
    \includegraphics[width=\textwidth]{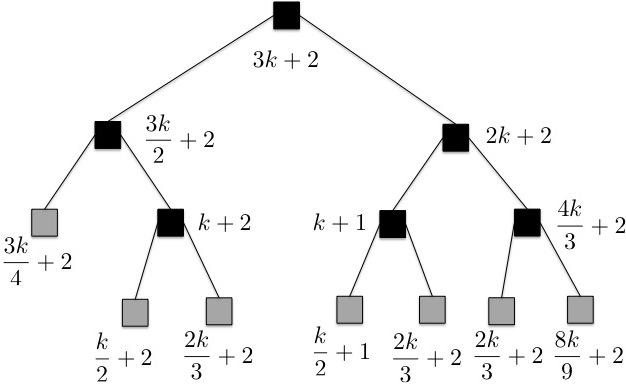}
\end{subfigure}
\caption{ Case 1: $|T_1| \geq 3k-1$. First tree decomposes into $6$ subtrees and second tree decomposes into $7$ trees. In total, we obtain $13$ subtrees. The number associated with each node is the upper bound on the size of the subtree.}
\label{fig:case1}
\end{figure}

\begin{figure}[ptbh]

\centering

\begin{subfigure}[t]{0.4\textwidth}
    \includegraphics[width=\textwidth]{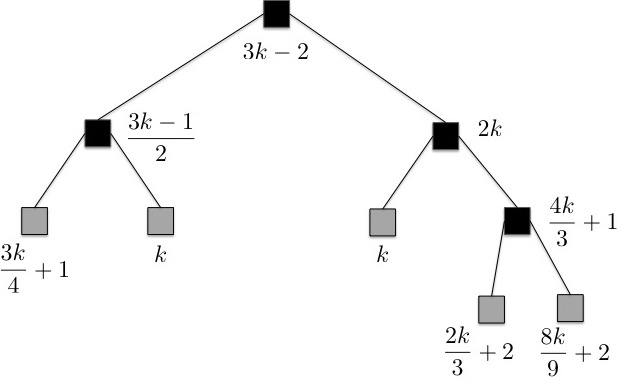}
\end{subfigure}
\quad \quad \quad
\begin{subfigure}[t]{0.4\textwidth}
    \includegraphics[width=\textwidth]{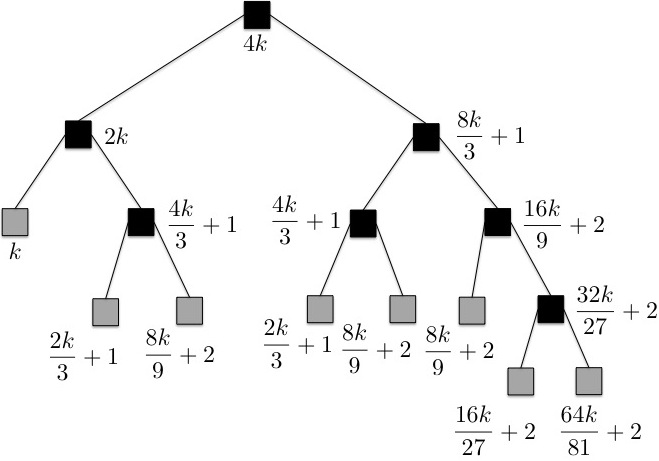}
\end{subfigure}

\caption{Case 2: $|T_1| \leq 3k-2$. First tree decomposes into $5$ subtrees and second tree decomposes into $8$ trees. In total, we obtain $13$ subtrees. The number associated with each node is the upper bound on the size of the subtree.}
\label{fig:case2}

\end{figure}

Using Lemma~\ref{lemma:decomp}, we can convert the bicriteria approximation for {\sc Bcds} to a true approximation algorithm. In particular, we show the following - 
\begin{theorem}
There is a $\frac{1}{13}(1 - \frac{1}{e})$ approximation algorithm for the {\sc Bcds} problem.
\end{theorem}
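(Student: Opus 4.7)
The plan is to combine the bicriteria approximation of Lemma~\ref{lemma:biapprox} with the tree decomposition of Lemma~\ref{lemma:decomp} and to observe that, by the greedy labeling, the total profit of any chosen subtree is a lower bound on the number of vertices it actually dominates.

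First, I would invoke Lemma~\ref{lemma:biapprox} to produce a tree $T$ with at most $6k$ vertices whose total profit is at least $(1-\tfrac{1}{e})\text{\sc Opt}$. Then I would apply Lemma~\ref{lemma:decomp} to decompose $T$ into $13$ subtrees $T_1,\dots,T_{13}$, each containing at most $k$ vertices. Because the decomposition only duplicates vertices (it never deletes them), we have
\[
\sum_{i=1}^{13} \sum_{v \in T_i} p(v) \;\ge\; \sum_{v \in T} p(v) \;\ge\; \Bigl(1-\tfrac{1}{e}\Bigr)\text{\sc Opt},
\]
so by averaging at least one $T_i$ has total profit at least $\tfrac{1}{13}(1-\tfrac{1}{e})\text{\sc Opt}$. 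The dynamic program of Section~\ref{sec:finding-best-subtree} returns the subtree $\tilde{T}$ of $T$ on at most $k$ vertices that maximizes total profit, so $\sum_{v\in\tilde{T}} p(v) \ge \tfrac{1}{13}(1-\tfrac{1}{e})\text{\sc Opt}$.

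The key remaining observation is that this profit lower-bounds the actual coverage of $\tilde{T}$ as a connected dominating set. By construction of Algorithm~\ref{algm:greedy}, the sets $\{C_v\}_{v\in D}$ are pairwise disjoint and every $w\in C_v$ is dominated by $v$; hence $\tilde{T}$ dominates at least $\bigcup_{v\in\tilde{T}}C_v$, a set of size exactly $\sum_{v\in\tilde{T}} p(v)$. Therefore $\tilde{T}$ is a connected subset of at most $k$ vertices dominating at least $\tfrac{1}{13}(1-\tfrac{1}{e})\text{\sc Opt}$ vertices, establishing the claimed approximation ratio. (The guess of $\text{\sc Opt}$ used by Algorithm~\ref{algm:BCDS} is handled in the standard way by binary search over integers in $[k,n]$ and taking the best feasible output.)

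The main obstacle is really subsumed by the preceding lemmas; what is delicate here is just being careful that decomposition-duplication goes the right way (it increases, not decreases, the summed profit), and that the profit function, even though it was tailored to the greedy dominating set $D$, still yields a valid lower bound on the number of vertices dominated by any subset selected later, which is exactly the disjointness of the $C_v$'s guaranteed by Algorithm~\ref{algm:greedy}.
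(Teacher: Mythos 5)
Your proof is correct and follows essentially the same route as the paper: bicriteria tree of size $6k$, decomposition into $13$ subtrees of size at most $k$, averaging over profits, and selecting the best subtree via the dynamic program. Your explicit remark that the disjointness of the sets $C_v$ makes the summed profit a valid lower bound on the coverage of $\tilde{T}$ is a point the paper leaves implicit, but it does not change the argument.
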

\begin{proof}
By Lemma~\ref{lemma:biapprox}, we obtain a tree $T$ with at most $6k$ nodes with profit $(1-\frac{1}{e})\text{\sc Opt}$.
Now using Lemma~\ref{lemma:decomp}, we obtain $13$ trees in the worst case, say $T_1, T_2, \ldots T_{13}$. Finally, out of these 13 trees (each of size at most $k$), we pick the tree $\tilde{T}$ with the highest total profit. Let, $p(T) = \sum_{v \in T}p(v)$ denote the total profit of tree $T$. Then we have, 
\[p(\tilde{T}) \geq \frac{1}{13}\sum_{i=1}^{13} p(T_i) \geq \frac{1}{13} p(T) \geq \frac{1}{13}(1 - \frac{1}{e})\text{\sc Opt}\]
Thus we have a $\frac{1}{13}(1 - \frac{1}{e})$ approximation guarantee. \QEDA
\end{proof}


\subsubsection{Finding the Best Subtree}
\label{sec:finding-best-subtree}

Although the decomposition Lemma~\ref{lemma:decomp} is useful to prove a theoretical bound, from a practical perspective it is better to use a dynamic programming approach to find the best $k$ sub-tree. Formally, we have the following problem.  
Given a tree $T = (V,E)$ of $n$ vertices, profits on vertices $p : V \rightarrow \mathbb{Z}^+ \cup \{0\}$, and an integer $k$, find a subtree $\tilde{T}$ of $k$ vertices which maximizes the total profit $\tilde{P} = \sum_{v \in \tilde{T}} p(v)$. We show that this problem can be solved in polynomial time using dynamic programming.
Let the tree $T$ be rooted at an arbitrary vertex and $T_v$ denote the subtree rooted at a vertex $v$. We define the following - 

$F(v,i) \leftarrow$ best solution of at most $i$ vertices completely contained inside $T_v$.

$G(v,i) \leftarrow$ best solution of at most $i$ vertices completely contained inside $T_v$ such that $v$ is a part of the solution.

The desired solution is thus at $F(root, k)$. The base cases (when $v$ is a leaf) are trivial. Let $v_1, v_2, \ldots, v_l$ denote the children of vertex $v$. We now have the following recurrence - 
\begin{align}\nonumber   
F(v,i) &= \max \left\{ \max_{1\leq j \leq l} \{F(v_j, i)\} , G(v,i) \right\}\\ \nonumber
G(v,i) &= p(v) + M(l,i-1) \\ \nonumber
\shortintertext{where $M(j,i')$ denotes the best way to distribute a budget of $i'$ among the first $j$ children of $v$. In other words,}
M(l,i-1) &= \max_{i_1+i_2+\ldots+i_l= i-1}  \left\{ \sum_j {G(v_j,i_j)}\right\} \nonumber
\end{align}
$M(j,i')$ is computed using another dynamic program as follows. Again the base cases when $j = 0$ or $i' = 0$ are trivial. For $1\leq j \leq l$ and $1 \leq i'\leq i-1$, we have the following recurrence - 
\begin{align}\nonumber
  M(j,i') &= \max_{0 \leq i^* \leq i'} \left \{ M(j-1, i^*) + G(v_j, i' - i^*)\right \}  
\end{align}


\section{Budgeted Generalized {\bf \sc Cds}}

\label{sec:bgcd}

In this section, we show that our approach extends to more general budgeted connected domination problems. 
Formally, given a graph $G = (V,E)$, a budget $k$, and a monotone \emph{special submodular} profit function $f : 2^V \rightarrow \mathbb{Z^+} \cup{0}$, find a subset $S \subseteq V$ which maximizes $f(S)$ such that $|S| \leq k$ and induces a connected subgraph of $G$. 
\noindent As mentioned earlier in Section \ref{sec:preliminaries}, this problem captures the budgeted variants of the capacitated and weighted profit connected dominating set problems. 

\subsection{ Algorithm.}
Algorithm~\ref{algm:BGCDS} begins by running the standard greedy algorithm to find a basis of the polymatroid associated with $f$. In other words, we greedily pick a vertex $v$ with the maximum marginal profit $f(D \cup \{v\}) - f(D)$ until all vertices have zero marginal profit. With every selected vertex, we associate the marginal profit gained, and associate zero profit with the other vertices. Finally, we run a quota Steiner tree algorithm using these profits to find the smallest tree that yields a profit of at least $(1 - \frac{1}{e}) \text{\sc Opt}$ where $\text{\sc Opt}$ is the optimal profit (which we guess). In the analysis section, we show that there exists a tree $\hat{T}$ of size at most $3k$ with $f(\hat{T}) \geq (1 - \frac{1}{e}) \text{\sc Opt}$. Hence, the 2-approximation for the quota Steiner tree yields a tree $T$ of size at most $6k$ yielding the desired profit. Finally using the tree decomposition described earlier, we show that we can obtain a tree $\tilde{T}$ of size at most $k$ with $f(\tilde{T}) \geq \frac{1}{13}(1 - \frac{1}{e}) \text{\sc Opt}$.

\subsection{\bf Analysis.}
\label{sec:analysis-BGCDS}
Let the $L_1$ denote the vertices in the optimal solution and $f(L_1) = \text{\sc Opt}$. Let $L_2$ denote the set of vertices which have at least one neighbor in $L_1$, and similarly let $L_3$ denote the set of vertices having a neighbor in $L_2$ (and NOT in $L_1$). Let $R = V \setminus \{L_1 \cup L_2 \cup L_3\}$ denote the rest of the vertices. Let $L_i' = D \cap L_i$ where $D$ is the set of vertices chosen by the greedy algorithm. 

Further, let $D'$ denote the first $k$ vertices picked by the greedy algorithm from $L_1' \cup L_2' \cup L_3'$. To simplify notation, let $D' = \{v_1, v_2, \ldots, v_k\}$ and let $D_i$ denote the the set of vertices already picked by the greedy algorithm when the vertex $v_{i+1}$ is being chosen. Hence we have $v_{i+1} = \argmax_{v \in V \setminus D_i} f(D_i \cup \{v\}) - f(D_i)$ and $p(v_{i+1}) = f(D_i \cup \{v_{i+1}\}) - f(D_i)$. Note that in particular $D_i \subseteq D$ but may not be a subset of $D'$. Also let $D'_i = \cup_{j=1}^i v_j$ denote the first $i$ vertices in $D'$. Let $P(D'_i) = \sum_{v \in D'_i} p(v)$ denote the total profit associated with the set $D_i'$. Finally let $D_i'' = D_i \setminus D_i'$ be the vertices in $D_i \cap R$.

\begin{claim}
  $p(v_{i+1}) = P(D_{i+1}') - P(D_{i}') \geq \frac{1}{k}(\text{\sc Opt} - P(D_i'))$
\label{claim:bgcds}
\end{claim}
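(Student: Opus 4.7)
The plan is to mirror the argument of Claim~\ref{claim:maxcov}. Since greedy chooses $v_{i+1}$ to maximize $f_{D_i}(\cdot)$ and the optimum satisfies $|L_1| \le k$, an averaging step combined with submodularity gives $p(v_{i+1}) = f_{D_i}(v_{i+1}) \ge \max_{v \in L_1} f_{D_i}(v) \ge \frac{1}{|L_1|} \sum_{v \in L_1} f_{D_i}(v) \ge \frac{1}{k} f_{D_i}(L_1)$. The entire claim therefore reduces to establishing the key inequality $f_{D_i}(L_1) \ge \text{\sc Opt} - P(D_i')$.

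I anticipate the main obstacle to be the mismatch between $f(D_i')$ and $P(D_i')$: plain monotonicity yields only $f_{D_i'}(L_1) \ge \text{\sc Opt} - f(D_i')$, and because $R$-picks interleaved between the $v_j$'s inflate the greedy state at which each $p(v_j)$ is recorded, in general $P(D_i') \le f(D_i')$, so this naive bound is too weak. To close the gap I would invoke the second clause of the \emph{special submodular} property on the $R$-picks $D_i'' := D_i \setminus D_i' \subseteq R$. Since $R$ has no neighbors in $L_1 \cup L_2$, we have $N(D_i'') \subseteq R \cup L_3$, whereas $N(L_1) \subseteq L_1 \cup L_2$, so $N(L_1) \cap N(D_i'') = \emptyset$. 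Taking $A = \emptyset$, $B = L_1$, $X = D_i''$ in the special-submodular equality yields $f(D_i'') = f_{L_1}(D_i'')$, i.e.\ $f(L_1 \cup D_i'') = \text{\sc Opt} + f(D_i'')$. Monotonicity then gives $f_{D_i}(L_1) = f(L_1 \cup D_i) - f(D_i) \ge f(L_1 \cup D_i'') - f(D_i) = \text{\sc Opt} + f(D_i'') - f(D_i)$.

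To finish, I would show $f(D_i) - f(D_i'') \le P(D_i')$, which combined with the previous display immediately yields $f_{D_i}(L_1) \ge \text{\sc Opt} - P(D_i')$ and hence the claim. Expanding $f(D_i) - f(D_i'') = f_{D_i''}(D_i')$ by inserting $v_1, \ldots, v_i$ into $D_i''$ one at a time, the target reduces to the term-by-term comparison $\sum_{j=1}^i f_{D_i'' \cup \{v_1, \ldots, v_{j-1}\}}(v_j) \le \sum_{j=1}^i p(v_j)$. This follows from plain submodularity: when greedy selected $v_j$, its state was $D_{j-1} = \{v_1, \ldots, v_{j-1}\} \cup D_{j-1}''$ with $D_{j-1}'' \subseteq D_i''$, so $D_{j-1} \subseteq \{v_1, \ldots, v_{j-1}\} \cup D_i''$ and therefore $p(v_j) = f_{D_{j-1}}(v_j) \ge f_{D_i'' \cup \{v_1, \ldots, v_{j-1}\}}(v_j)$. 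Summing over $j$ closes the chain and substituting back into $p(v_{i+1}) \ge \frac{1}{k} f_{D_i}(L_1)$ gives the claim.
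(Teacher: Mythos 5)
Your proof is correct and follows essentially the same route as the paper's: both arguments hinge on the disjointness $N(D_i'') \cap N(L_1) = \emptyset$ to invoke the special-submodularity clause, the telescoping bound $f_{D_i''}(D_i') = \sum_{j=1}^i f_{D_i'' \cup D_{j-1}'}(v_j) \leq \sum_{j=1}^i p(v_j) = P(D_i')$, and a submodular averaging over $L_1$ combined with the greedy choice of $v_{i+1}$. The only difference is cosmetic: you apply the special property once with $(A,B,X)=(\emptyset, L_1, D_i'')$ and then use monotonicity, whereas the paper applies it to the residual function $f_{D_i'}$ on $L_1$-vertices, but the key quantities and inequalities are identical.
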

\begin{proof}\belowdisplayskip=-11pt
Consider the marginal profit of the set $L_1 \setminus D_i'$. Since $N(D_i'')$ does not intersect with $N(L_1)$, we have,
\begin{align}
  f_{D_i'}(L_1 \setminus D_i') &= f_{D_i' \cup D_i''}(L_1 \setminus D_i') \nonumber \\ \nonumber
  &= f_{D_i''}(D_i' \cup (L_1 \setminus D_i')) - f_{D_i''}(D_i') \\ \nonumber
  &\geq f_{D_i''}(L_1) - f_{D_i''}(D_i') \\\nonumber
  &= f(L_1) - f_{D_i''}(D_i') \\
\label{eq:1}  &= \text{\sc Opt} - f_{D_i''}(D_i') 
\shortintertext{Let us now consider the term $f_{D_i''}(D_i')$. Adding up over successive marginal profits, }\label{eq:3}
f_{D_i''}(D_i') &= \sum_{j=1}^i f_{D_i'' \cup D_{j-1}'}(v_j)
\leq \sum_{j=1}^i f_{D_{j-1}}(v_j) \\\nonumber  
&= \sum_{j=1}^i p(v_j) = P(D_i') \nonumber
\shortintertext{From Eq~(\ref{eq:1}) and Eq~(\ref{eq:3}), }
f_{D_i'}(L_1 \setminus D_i') &\geq \text{\sc Opt} - P(D_i') \nonumber
\shortintertext{As $f$ is submodular, we have}
f_{D_i'}(L_1 \setminus D_i') &\leq \sum_{w \in L_1 \setminus D_i'} f_{D_i'}(\{w\}) \nonumber
\shortintertext{Since $|L_1 \setminus D_i'| \leq k$, there exists at least one vertex $w \in L_1 \setminus D_i'$ satisfying}\
f_{D_i'}(\{w\}) &\geq \frac{1}{k}  f_{D_i'}(L_1 \setminus D_i') \geq \frac{1}{k} (\text{\sc Opt} - P(D_i')) \nonumber
\shortintertext{Using $f_{D_i}(\{w\}) = f_{D_i'}(\{w\})$ and the fact that greedy picked $v_{i+1}$ at this stage}
p(v_{i+1}) &= f_{D_i}(\{v_{i+1}\}) \geq f_{D_i}(\{w\}) \nonumber\\ 
&\geq  \frac{1}{k} (\text{\sc Opt} - P(D_i')) \nonumber 
\end{align} \QEDA
\end{proof}

Solving the recurrence of Claim~\ref{claim:bgcds}, we have $P(D') \geq (1 - \frac{1}{e})\text{\sc Opt}$. 

We thus have a set $D'$ of size $k$ which yields a total profit of at least $(1 - \frac{1}{e}) \text{\sc Opt}$. We now proceed to show that the above set $D'$ can be connected at a relatively low cost. Since every vertex in $D'$ can be connected to $L_1$ using at most one vertex (from $L_2$), we can obtain a connected subset $\hat{T}$ of size at most $3k$ by choosing $D'$, $L_1$ and vertices in $L_2$ as described.
Hence, the 2-approximation for the {\sc Qst} problem will yield a tree $T$ of size at most $6k$ which would give a profit of at least $(1 - \frac{1}{e}) \text{\sc Opt}$. Finally applying the tree decomposition described earlier we obtain a tree $\tilde{T}$ of size $\leq k$ with $f(\tilde{T}) \geq P(\tilde{T}) \geq \frac{1}{13} (1 - \frac{1}{e}) \text{\sc Opt}$. 

\vspace{6mm}
\hrule
\vspace{0.9mm}
\hrule
\begin{algorithm}{ \sc Greedy Profit Labeling Algorithm for {\sc Bgcds}.}
\label{algm:BGCDS}

\noindent {\bf Input:} Graph $G=(V,E)$, a monotone special submodular function $f:2^V\rightarrow \mathbb{Z}^+\cup\{0\}$ and $k\in \mathbb{Z}^+\cup\{0\}$.

\noindent {\bf Output:} Tree $\tilde{T}$ with at most $k$ vertices.
\vspace{1mm}
\hrule
\vspace{1mm}
\begin{algorithmic}[1]
\State Run the Generalized Greedy Dominating Set Routine (Algorithm~\ref{algm:gen-greedy}) on $(G,f)$ to obtain a subset $D$ and a profit function $p:V\rightarrow \mathbb{N}$.
\State $\text{\sc Opt} \leftarrow $ profit of an optimal solution. (Guess using binary search 0 and $f(V)$).
\State $T \leftarrow$ 2-approximation for QST with quota $(1-\frac{1}{e})\text{\sc Opt}$.
\State Use the dynamic program of Section \ref{sec:finding-best-subtree} to find $\tilde{T}$, the best subtree of $T$ having at most $k$ vertices.
\end{algorithmic}
\vspace{1mm}
\hrule
\vspace{1mm}
\hrule
\end{algorithm}

\hrule
\vspace{1mm}
\hrule
\begin{algorithm} { \sc Generalized Greedy Dominating Set.}
\label{algm:gen-greedy}

\noindent {\bf Input:} Graph $G=(V,E)$ and a monotone special submodular function $f:2^V\rightarrow \mathbb{Z}^+\cup\{0\}$.

\noindent {\bf Output:} $D\subseteq V$ such that $f(D) = f(V)$ and profit function $p:V\rightarrow \mathbb{Z}^+\cup\{0\}$.
\vspace{1mm}
\hrule
\vspace{1mm}
\begin{algorithmic}[1]
\State  $D \leftarrow \phi$ 
 \While{$f(D) \neq f(V)$}
\State  $v \leftarrow \displaystyle \argmax_{v \in V \setminus D} \quad f(D \cup \{v\}) - f(D)$
 \State  $p(v) \leftarrow f(D \cup \{v\}) - f(D)$
  \State $D \leftarrow D \cup \{v\}$
\EndWhile
 \ForAll {$v \in V \setminus D$}
  \State $p(v) \leftarrow 0$
 \EndFor
\end{algorithmic}
\vspace{1mm}
\hrule
\vspace{1mm}
\hrule
\end{algorithm}

\section{Partial Generalized Connected Domination}
We now consider a partial coverage version of the generalized connected domination presented in Section~\ref{sec:bgcd}. In this problem, the goal is to find the smallest subset of vertices which induce a connected subgraph and have total profit at least $q$ (quota).
Just as for the budgeted case, the algorithm proceeds by finding a spanning subset greedily. Using profits as defined by the greedy algorithm, we then find a {\sc Qst} having total profit at least $q$. In the analysis section, we show that there exists a tree $\hat{T}$ of size at most $2k \ln q + k$ with total profit at least $q$. Hence, the 2-approximation for {\sc Qst} yields a tree $T$ of size at most $4k \ln q + 2k$ leading to a $O(4 \ln q)$ approximation.

\subsection{Analysis}
We reuse notation from Section~\ref{sec:bgcd} regarding the layers $L_i$ and $L_i'$. Let $D'$ denote the first $k \ln q + 1$ vertices picked by the greedy algorithm from $L_1' \cup L_2' \cup L_3'$. We now show that the total profit of vertices in $D'$ is at least $q$.
\\
\\

\begin{claim}\label{claim:pgcds} $P(D') \geq q$
\end{claim}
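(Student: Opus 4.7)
The plan is to adapt the argument of Claim~\ref{claim:bgcds} so that it yields the covering-style (logarithmic) bound required here rather than the $(1-1/e)$ bound of the budgeted case. Writing $k = |\text{\sc Opt}|$, I reuse the decomposition $D_i = D_i' \cup D_i''$ with $D_i'' \subseteq R$, and recall that every vertex of $R$ has no neighbor in $L_1 \cup L_2$. This is what lets me invoke the special-submodular identity $f_A(X) = f_{A \cup B}(X)$ with $B = D_i''$ whenever $X$ is either $L_1 \setminus D_i'$ or a single vertex $w \in L_1$, since in both cases $N(X) \subseteq L_1 \cup L_2$ is disjoint from $N(D_i'')$.

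First I would replay the chain of inequalities of Claim~\ref{claim:bgcds} verbatim -- only now, $L_1$ is constrained by feasibility ($f(L_1) \geq q$) rather than by being profit-maximal. This gives, as long as $P(D_i') < q$,
\[
p(v_{i+1}) \;\geq\; \frac{1}{k}\bigl(f(L_1) - P(D_i')\bigr) \;\geq\; \frac{1}{k}\bigl(q - P(D_i')\bigr),
\]
where the first inequality is derived exactly as in Section~\ref{sec:analysis-BGCDS} and the second uses feasibility of $L_1$.

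Next I would solve the recurrence. Putting $h_i = \max\{0,\, q - P(D_i')\}$, the inequality above gives $h_{i+1} \leq (1 - 1/k)\, h_i$, so $h_i \leq q\,(1-1/k)^i \leq q\,e^{-i/k}$. For $i \geq k \ln q$ this forces $h_i \leq 1$, and since both $q$ and $P(D_i')$ are integers, $P(D_i') \geq q - 1$. If equality is tight, applying the recurrence once more gives $p(v_{i+1}) \geq \tfrac{1}{k} > 0$, which by integrality is at least $1$. Hence $P(D'_{\lceil k\ln q\rceil + 1}) \geq q$, matching the content of the claim (the stated size $k \ln q + 1$ being read with the ceiling implicit).

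The main obstacle I anticipate is not the recurrence -- that is standard submodular-cover bookkeeping -- but making the special-submodular property do its work cleanly at each application: every time one writes down a marginal of the form $f_{D_i}(\cdot)$, one must reduce it to $f_{D_i'}(\cdot)$, which requires checking $N(X) \cap N(D_i'') = \emptyset$ for the relevant $X$. That check rests entirely on the layering $L_1, L_2, L_3, R$ and the definition of $R$ as vertices at least two hops from $L_1$, so it proceeds exactly as in Section~\ref{sec:analysis-BGCDS}.
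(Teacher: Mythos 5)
Your proposal is correct and follows essentially the same route as the paper: it re-derives the recurrence of Claim~\ref{claim:bgcds} with $f(L_1)\geq q$ in place of $\text{\sc Opt}$, solves it to get $P \geq q-1$ after $k\ln q$ steps, and finishes by integrality with one extra vertex. You are in fact slightly more explicit than the paper about the final integrality step and the ceiling on $k\ln q$ (and your parenthetical ``two hops from $L_1$'' should read that vertices of $R$ have no neighbor in $L_1\cup L_2$, which is the fact you actually use).
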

\begin{proof}
As per Claim ~\ref{claim:bgcds}, we obtain the following recurrence
\begin{align}
P(D_{i+1}') &\geq (1 - (1 - \frac{1}{k})^{i+1}) q \\
\intertext{Substituting $i+1 = k \ln q$, we get, }
P(D_{k \ln q}') &\geq (1 - (1 - \frac{1}{k})^{k \ln q}) q \\
&\geq (1 - \frac{1}{q}) q \geq q - 1
\intertext{Since profit function $f$ is integral, we have}
P(D_{k \ln q + 1}') &\geq q
\end{align} \QEDA
\end{proof}

\begin{theorem} Given that the optimal solution is of size $k$, there exists a tree $\hat{T}$ of size at most $2k \ln q + k + 2$ such that $\sum_{v \in \hat{T}} p(v) \geq q$
\end{theorem}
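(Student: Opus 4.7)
The plan is to mirror the argument used in Theorem~\ref{thm:QST2} (the {\sc Pcds} case) and in Lemma~\ref{lem:budgeted} (the {\sc Bcds} case), since Claim~\ref{claim:pgcds} already provides the ``profit side'' of the bound. Concretely, I will start from the set $D' \subseteq L_1' \cup L_2' \cup L_3'$ consisting of the first $k\ln q + 1$ vertices that the generalized greedy picks from $L_1' \cup L_2' \cup L_3'$; by Claim~\ref{claim:pgcds} this set already satisfies $\sum_{v \in D'} p(v) \geq q$. The remaining job is purely structural: to exhibit a small connected vertex set containing $D'$.

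For the connectivity step I would first handle the vertices of $D' \cap L_3$. By definition of $L_3$, every such vertex has at least one neighbor in $L_2$, so I can select a set $D'' \subseteq L_2$ with $|D''| \leq |D' \cap L_3| \leq k\ln q + 1$ such that every vertex of $D' \cap L_3$ is adjacent to some vertex of $D''$. After this step, every vertex of $D' \cup D''$ lies in $L_1 \cup L_2$ or is adjacent to a vertex of $L_2$, i.e.\ to a vertex at distance at most one from $L_1$. I would then simply add all of $L_1$ (which has size $k$, since $L_1$ is the optimal solution and we are told $|\text{\sc Opt}|=k$); because $L_1$ induces a connected subgraph and every vertex of $(D' \cup D'') \cap (L_2 \cup L_3)$ is within one edge of $L_1$, the union $\hat{D} = D' \cup D'' \cup L_1$ induces a connected subgraph of $G$.

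Counting vertices gives $|\hat{D}| \leq |D'| + |D''| + |L_1| \leq (k\ln q + 1) + (k\ln q + 1) + k = 2k\ln q + k + 2$, and any spanning tree $\hat{T}$ of the connected subgraph induced by $\hat{D}$ has at most $2k\ln q + k + 2$ vertices. Since $D' \subseteq \hat{T}$ and vertices outside $D$ receive profit zero, we still have $\sum_{v \in \hat{T}} p(v) \geq \sum_{v \in D'} p(v) \geq q$, as required.

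I do not expect any substantive obstacle here: the only ``content'' is Claim~\ref{claim:pgcds}, which has already been proved, and the connectivity charging is the same template used in the {\sc Pcds} and {\sc Bcds} sections. The one small point to be careful about is that the profit function $p$ is determined by the generalized greedy run in Algorithm~\ref{algm:gen-greedy}, not by a {\sc Pcds}-style greedy; but because Claim~\ref{claim:pgcds} is stated in terms of exactly this $p$, no additional work is needed to transfer the profit bound, and the connectivity argument uses only the combinatorial layer structure, which is independent of $f$.
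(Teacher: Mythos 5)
Your proposal is correct and follows essentially the same route as the paper: the paper's proof also invokes Claim~\ref{claim:pgcds} for the profit bound and then connects $D'$ exactly as in Theorem~\ref{thm:QST2}, i.e.\ by adding at most $k\ln q + 1$ vertices of $L_2$ to reach $D'\cap L_3$ and all $k$ vertices of $L_1$, giving the $2k\ln q + k + 2$ bound. Your write-up merely spells out the connectivity step that the paper cites by reference, so no substantive difference.
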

\begin{proof}
In Claim~\ref{claim:pgcds} above, we have demonstrated the existence of a set of size at most $k \ln q + 1$ with the requisite total profit. We now show that this set can be connected at low cost. As in  Theorem~\ref{thm:QST2}, we can see that by selecting at most $k \ln q + 1$ more vertices from layer $L_2$ and at most $k$ vertices from layer $L_1$, the set $D'$ can be connected to form a tree $\hat{T}$. \QEDA
\end{proof}

Finally using the 2-approximation for {\sc Qst}, we obtain a $O(4 \ln q)$ approximation.

\vspace{2mm}
\section{Conclusion and Future Work}
We consider partial and budgeted versions of the well studied connected dominating set problem. We observe that various algorithms which perform well in the  \emph{complete} version of the connected dominating set have unbounded approximation guarantee in the partial case. Using a surprising \emph{greedy profit labeling} algorithm we obtain the first $O(\log n)$ approximation for the partial connected dominating set problem and a $\frac{1}{13}(1-\frac{1}{e})$ approximation for the budgeted version. We also extend our results to a \emph{special submodular} problem, which includes capacitated and weighted profit versions of the {\sc Pcds} and {\sc Bcds} problems as special cases. Our results are tight up to a constant factor in all the cases. A natural open question is to improve these constants.
\\\\
\noindent{\bf Acknowledgment:}
The first author would like to thank Yossi Azar for useful discussions, held during the Dagstuhl seminar on Scheduling (2013),
about the failure of prior methods for the budgeted {\sc Cds} problem.

\bibliographystyle{plain}
\bibliography{cds}

\end{document}